\documentclass[11pt,a4paper,reqno]{amsart}%
\usepackage{amsthm,amsmath,amsfonts,amssymb,xcolor,amsxtra,appendix,bookmark,dsfont,bm,mathrsfs}
\usepackage[latin1]{inputenc}
\usepackage{color} 
\usepackage{amssymb}\usepackage{graphicx}
\usepackage{tikz}
\usepackage{hyperref}
\theoremstyle{plain}
\newtheorem{theorem}{Theorem}
\newtheorem{lemma}[theorem]{Lemma}

\newtheorem{proposition}[theorem]{Proposition}

\newtheorem{definition}{Definition}

\theoremstyle{remark}
\newtheorem{remark}[theorem]{Remark}

\allowdisplaybreaks

\title[Global time-analytic strong solutions for a class of 3-wave kinetic equations]{Global time-analytic strong solutions for a  class of 3-wave kinetic equations}

\author[N. G. Hien]{Nguyen Gia Hien}
\address{Department of Mathematics, Texas A\&M University, College Station, TX 77843, USA}
\email{giahien-nguyen@tamu.edu} 

\author[G. Staffilani]{Gigliola Staffilani}
\address{Department of Mathematics, Massachusetts Institute of Technology, Cambridge, MA 02139, USA}
\email{gigliola@math.mit.edu} 
\thanks{G.S. is  funded in part by  the NSF grants DMS-2052651, DMS-2306378 and the Simons Foundation through the Simons Collaboration on Wave Turbulence.}

\author[M.-B. Tran]{Minh-Binh Tran}
\address{Department of Mathematics, Texas A\&M University, College Station, TX 77843, USA}
\email{minhbinh@tamu.edu} 
\thanks{M.-B. T is  funded in part by  a   Humboldt Fellowship,   NSF CAREER  DMS-2303146, and NSF Grants DMS-2204795, DMS-2305523,  DMS-2306379.}

\begin{document}
	\date{\today}
	\begin{abstract} 
		We study a class of 3-wave kinetic equations arising in wave turbulence theory, with  regularized kernels. For radial, nonnegative initial data, we construct an exact global-in-time strong solution which remains nonnegative and is analytic with respect to time. The proof combines a careful analysis of the resonant interaction surfaces with a time power-series construction and a continuation argument based on the conservation of the energy moment.

	\end{abstract}
	\maketitle
	
	\section{Introduction}\label{intro}

	Wave turbulence theory provides a statistical description of the transfer of
	energy across scales in weakly nonlinear dispersive wave systems. Since its
	early development, it has become a central framework in mathematical physics,
	with applications to rotating fluids, magnetohydrodynamics, Alfv\'en waves in
	the solar wind, plasma turbulence, and wave processes arising in fusion-related
	models. The foundations of the theory can be traced back to the pioneering work
	of Peierls \cite{Peierls:1993:BRK}, and were further developed through the
	seminal contributions of Benney and Saffman \cite{benney1966nonlinear},
	Zakharov and Falkovich \cite{zakharov1967weak}, Benney and Newell
	\cite{benney1969random}, and Hasselmann
	\cite{hasselmann1962non,hasselmann1974spectral}. These developments led to the
	formulation of wave kinetic equations, in particular the classical 3-wave
	and 4-wave kinetic equations, which describe the resonant exchange and
	redistribution of energy among weakly interacting modes. In recent years, the
	rigorous derivation of such kinetic equations from underlying dispersive wave
	systems has seen major progress, most notably in a series of works by Deng and
	Hani
	\cite{deng2019derivation,deng2021propagation,deng2023long,deng2023,
		deng2021full}. For broader accounts of the physical theory, its scaling laws,
	and its applications, we refer to
	\cite{Nazarenko:2011:WT,PomeauBinh,zakharov2012kolmogorov}.

	In this work, we consider the following 3-wave kinetic equation
	\begin{equation}\label{3wave}
		\begin{aligned}
			\partial_t f(t,k) \ = & \ Q_{3w}[f](k), \ \ \ (t,k)\in (0,\infty)\times\mathbb{R}^d, \\\
			f(0,k) \ = & \ f_0(k),  k\in\mathbb{R}^d, \ \  d\ge 3
		\end{aligned}
	\end{equation}
	in which the collision operator is of the form 
	\begin{equation}\label{def-Qf}
		Q_{3w}[f](k) \ := \ \iint_{\mathbb{R}^{2d}} \Big[ R_{k,k_1,k_2}[f] - R_{k_1,k,k_2}[f] - R_{k_2,k,k_1}[f] \Big]\mathrm{d}k_1\mathrm{d}k_2
	\end{equation}
	with
	\begin{equation}\label{def-R}
		\begin{aligned}
			R_{k,k_1,k_2} [f] \ := \ V_{k,k_1,k_2}\delta(k-k_1-k_2)\delta(\omega_k -\omega_{k_1}-\omega_{k_2})(f_1f_2-ff_1-ff_2)
		\end{aligned}
	\end{equation}
	with the short-hand notation $f = f(t,k)$ and $f_j = f(t,k_j)$. The Dirac delta function $\delta(\cdot)$ is to ensure the following resonant conditions for the wavenumbers:
	\begin{equation}\label{cv} 
		k = k_1 + k_2 , \qquad \omega_k = \omega_{k_1} + \omega_{k_2},
	\end{equation}
	with $\omega_k$ denoting the dispersion relation of the waves.

	3-wave kinetic equations appear in a broad range of physical models and
	have been investigated from several analytical perspectives. They arise, for
	instance, in the study of phonon interactions in anharmonic crystal lattices
	\cite{CraciunBinh,EscobedoBinh,GambaSmithBinh,tran2020reaction}, capillary
	wave turbulence \cite{nguyen2017quantum}, and stratified flows in geophysical
	fluid dynamics \cite{GambaSmithBinh,kim2025wave}. Closely related 3-wave
	kinetic structures also occur in kinetic models connected with Bose--Einstein
	condensation
	\cite{AlonsoGambaBinh,cortes2020system,EPV,escobedo2023linearized1,
		escobedo2023linearized,ToanBinh,nguyen2017quantum,Binh1,staffilani2025formation}. Entropy structures, energy cascade and ergodicity of those equations have been studied in \cite{rumpf2021wave,soffer2019energy,staffilani2026entropy}.  Alongside these
	analytical developments, there has been growing numerical interest in
	3-wave kinetic dynamics, including recent computational and data-driven
	studies
	\cite{banks2025new,das2024numerical,tran2026analysis,walton2023numerical,walton2024numerical,walton2022deep}.

	We consider  the following power-law dispersion relation
	\begin{equation}\label{def-Epp}
		\omega_k = \omega(k) = |k|^\gamma, \qquad 1 < \gamma< 2.
	\end{equation}
	
	\noindent The precise form of the collision kernel $V_{k,k_1,k_2}$ is 
	\begin{equation}\label{def-V}
		V_{k,k_1,k_2} := \mathscr M(k)\,\mathscr M(k_1)\,\mathscr M(k_2),
	\end{equation}
	where
	\begin{equation}\label{M}
		\mathscr M(k) := |k|^{\gamma+d-2}\, \exp(-\mathscr A|k|),
	\end{equation}
	for a constant number $\mathscr A>0$.
	
	\begin{remark}[On the choice of the collision kernel]
		The kernel \eqref{def-V}--\eqref{M} should be understood as a model
		interaction kernel. The present choice of \(\mathscr M(k)\) 
		is made in order to retain the resonant 3-wave collision structure while
		providing a nonnegative, radial, separable, and high-frequency localized
		kernel. The exponential factor gives strong decay at large wave numbers,
		whereas the power-law prefactor controls the behavior near the origin and is
		compatible with the weighted estimates used below. Thus the results of this
		paper should be interpreted as applying to a regularized model class of
		3-wave kinetic equations.
	\end{remark}
	The main contribution of this paper is the construction of a global-in-time,
	nonnegative solution to the 3-wave kinetic equation \eqref{3wave} which is
	analytic with respect to time.    The theorem is proved for a regularized radial model, whose kernel retains the essential resonant 3-wave structure while allowing uniform control of the surface collision integrals.

	\begin{definition}[Finite-energy strong radial solution]
		Let \(T\in(0,\infty]\). A function
		\[
		f:[0,T)\times \mathbb R^d\to \mathbb R
		\]
		is called a finite-energy strong radial solution of the 3-wave kinetic equation
		\[
		\partial_t f=Q_{3w}[f],
		\qquad
		f(0,\cdot)=f_0,
		\]
		on \([0,T)\) if the following conditions hold.
		
		\begin{enumerate}
			\item For every \(t\in[0,T)\), the function
			\(f(t,\cdot)\) is radial. That is, there exists a measurable function
			\[
			F(t,\cdot):[0,\infty)\to \mathbb R
			\]
			such that
			\[
			f(t,k)=F(t,|k|)
			\]
			for a.e. \(k\in\mathbb R^d\).
			
			\item  One has
			\[
			f\in C\bigl([0,T);L^\infty_{\mathrm{rad}}(\mathbb R^d)\bigr),
			\]
			and for every \(0<T_0<T\),
			\[
			\sup_{0\le t\le T_0}
			\int_{\mathbb R^d} |f(t,k)|\,\omega(k)\,\mathrm dk
			<\infty .
			\]
			Equivalently, in radial variables,
			\[
			\sup_{0\le t\le T_0}
			\int_0^\infty |F(t,r)|\,r^{\gamma+d-1}\,\mathrm dr
			<\infty .
			\]
			
			\item  For every \(t\in[0,T)\),
			the surface integrals defining \(Q_{3w}[f(t)]\) are absolutely convergent
			for a.e. \(k\in\mathbb R^d\), and
			\[
			Q_{3w}[f(t)]\in L^\infty_{\mathrm{rad}}(\mathbb R^d).
			\]
			Moreover,
			\(
			t\mapsto Q_{3w}[f(t)]
			\)
			belongs to
			\(
			C\bigl([0,T);L^\infty_{\mathrm{rad}}(\mathbb R^d)\bigr).
			\)
			
			\item The map
			\(
			t\mapsto f(t,\cdot)
			\)
			belongs to
			\(
			C^1\bigl([0,T);L^\infty_{\mathrm{rad}}(\mathbb R^d)\bigr),
			\)
			and satisfies
			\[
			\frac{d}{dt}f(t,\cdot)=Q_{3w}[f(t)]
			\]
			in \(L^\infty_{\mathrm{rad}}(\mathbb R^d)\) for every \(t\in[0,T)\).
			
			\item One has
			\[
			f(0,\cdot)=f_0
			\]
			in \(L^\infty_{\mathrm{rad}}(\mathbb R^d)\).
		\end{enumerate}
		
		If, in addition,
		\[
		f(t,k)\ge 0
		\]
		for a.e. \(k\in\mathbb R^d\) and every \(t\in[0,T)\), then \(f\) is called a
		nonnegative  strong radial solution.
	\end{definition}
	
	\begin{theorem}\label{maintheorem}
		Assume that \(d\geq 3\), \(1<\gamma< 2\), and let
		\[
		\omega(k)=|k|^\gamma,
		\qquad
		\mathscr M(k)=|k|^{\gamma+d-2}\exp(-\mathscr A|k|),
		\qquad \mathscr A>0.
		\]
		Let the initial datum \(f_0\) be radial and nonnegative, and assume that
		\[
		M:=\|f_0\|_{L^\infty(\mathbb R^d)}<\infty
		\]
		and
		\[
		\mathcal M_1
		:=
		\int_0^\infty f_0(r)\,r^{\gamma+d-1}\,\mathrm dr
		<\infty .
		\]
		Then the 3-wave kinetic equation \eqref{3wave}, with collision
		operator defined by \eqref{def-Qf}, admits an exact unique global strong solution
		\(f(t,k)\) such that
		\[
		f(t,k)\geq 0
		\]
		for a.e. \(k\in\mathbb R^d\).
		
		Moreover,
		\(
		t\mapsto f(t,\cdot)
		\)
		is real analytic as an \(L^\infty_{\mathrm{rad}}(\mathbb R^d)\)-valued map on
		\((0,\infty)\), and is right-analytic at \(t=0\) in
		\(L^\infty_{\mathrm{rad}}(\mathbb R^d)\). More precisely, for every
		\(t_0>0\), there exist \(\tau_{t_0}>0\) and coefficients
		\(
		\mathcal A_n^{(t_0)}\in L^\infty_{\mathrm{rad}}(\mathbb R^d),
		n=0,1,2,\ldots,
		\)
		such that
		\[
		f(t_0+s,\cdot)
		=
		\sum_{n=0}^{\infty}
		s^n\mathcal A_n^{(t_0)}
		\quad
		\text{in }L^\infty_{\mathrm{rad}}(\mathbb R^d)
		\]
		for every \(|s|<\tau_{t_0}\). At \(t_0=0\), there exist
		\(\tau_0>0\) and coefficients
		\(
		\mathcal A_n^{(0)}\in L^\infty_{\mathrm{rad}}(\mathbb R^d)
		\)
		such that
		\[
		f(s,\cdot)
		=
		\sum_{n=0}^{\infty}
		s^n\mathcal A_n^{(0)}
		\quad
		\text{in }L^\infty_{\mathrm{rad}}(\mathbb R^d)
		\]
		for every \(0\leq s<\tau_0\).
	\end{theorem}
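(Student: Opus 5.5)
The plan is to treat $Q_{3w}$ as a bounded symmetric bilinear operator on a suitable weighted space, and to get global existence by combining local analytic solutions with a continuation argument.

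\textbf{Step 1: bilinear form and the basic estimate.} Write $Q_{3w}[f]=\mathcal B(f,f)$ with $\mathcal B$ symmetric bilinear. For radial $f$, parametrize the resonant manifold $k=k_1+k_2$, $|k|^\gamma=|k_1|^\gamma+|k_2|^\gamma$. After integrating out $k_2$ with the first delta, the remaining delta $\delta(|k|^\gamma-|k_1|^\gamma-|k-k_1|^\gamma)$ restricts $k_1$ to a surface of revolution around the $k$-axis. Since $1<\gamma<2$, the map $\theta\mapsto |k-k_1|^\gamma$ (angle between $k$ and $k_1$) is monotone, so the delta is resolved by an explicit Jacobian. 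This Jacobian degenerates only polynomially, near $|k_1|\to 0$ or $|k_1|\to|k|$.

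The weights $|k|^{\gamma+d-2}$ in $\mathscr M$ compensate these singularities, and $e^{-\mathscr A|k|}$ makes all integrals in $|k_1|$ absolutely convergent. The target is the key lemma
\[
\|\mathcal B(f,g)\|_{L^\infty}\le C\,\|f\|_{L^\infty}\|g\|_{L^\infty},
\]
with $C=C(d,\gamma,\mathscr A)$. Because the kernel is integrable in $k_1$ uniformly in $k$, this bound needs no moment. The same computation shows that $\mathcal B(f,g)$ decays like $|k|^{\gamma+d-2}e^{-\mathscr A|k|}$ times polynomial factors. Hence the energy moment $\int Q\,\omega\,dk$ is well defined.

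\textbf{Step 2: local analytic solution.} Insert $f(t)=\sum_n t^n\mathcal A_n$ with $\mathcal A_0=f_0$ and
\[
(n+1)\mathcal A_{n+1}=\sum_{i+j=n}\mathcal B(\mathcal A_i,\mathcal A_j).
\]
By Step 1, $a_n=\|\mathcal A_n\|_{L^\infty}$ satisfies $(n+1)a_{n+1}\le C\sum_{i+j=n}a_ia_j$. Comparison with the Riccati ODE $y'=Cy^2$, $y(0)=M$, gives $a_n\le M(CM)^n$. So the series converges in $L^\infty_{\mathrm{rad}}$ for $t<\tau=1/(CM)$, and the result is a strong solution. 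The same Lipschitz bound gives uniqueness by Gronwall, and it also gives analyticity at any $t_0$ once $\|f(t_0)\|_{L^\infty}$ is finite. The weighted moment of each $\mathcal A_n$ is finite by the decay noted in Step 1, so $f$ stays in the energy class.

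\textbf{Step 3: nonnegativity.} Split $Q=Q^+-f\,\mathcal L[f]$, where $\mathcal L[f]$ collects the loss terms, which are linear in $f$ (the $-ff_1-ff_2$ pieces and the corresponding pieces of the other two $R$'s). Not all terms of this form are sign-definite. I would rewrite $f$ via the Duhamel formula with the integrating factor $\exp(\int_0^t\mathcal L)$, applied to a positive-part iteration: run a Picard scheme on the time interval $\tau$ in which every gain term is nonnegative when $f\ge0$. This shows the unique solution is $\ge0$. If some loss-type term has the wrong sign, I would instead absorb it using the bound $|\mathcal L[f]|\le C\|f\|_\infty$, i.e. use $e^{\lambda t}f$ with $\lambda=C\|f\|_\infty$ to make the right-hand side monotone.

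\textbf{Step 4: continuation, the main obstacle.} The local time $\tau$ depends only on $\|f\|_{L^\infty}$, so an a priori bound on $\|f(t)\|_{L^\infty}$ over finite intervals is needed. Here conservation of energy is used. By symmetrizing over the three roles, resonance gives
\[
\int Q[f]\,\omega\,dk=0,
\]
so for $f\ge0$,
\[
\int f(t)\,\omega\,dk=\mathcal M_1\,|\mathbb S^{d-1}|
\]
for all times. The hard point is turning this into an $L^\infty$ bound, since Step 1 naturally yields only a quadratic bound. The plan is to redo Step 1 with a refined estimate using nonnegativity. The gain term satisfies $Q^+[f](k)\le C\|f\|_\infty\int f\,\omega$, because one of the two factors can be integrated against the kernel. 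The kernel factor $\mathscr M(k_1)$ with $|k_1|^{\gamma+d-2}$ dominates $|k_1|^\gamma$ times the Jacobian, and this domination is exactly why that power was chosen. The loss term has a favorable sign after Step 3. This yields
\[
\frac{d}{dt}\|f\|_\infty\le C\mathcal M_1\|f\|_\infty,
\]
hence $\|f(t)\|_\infty\le Me^{C\mathcal M_1t}$. That bound rules out blow-up, so the local solutions extend to a global, unique, nonnegative solution that is analytic on $(0,\infty)$ and right-analytic at $0$. I expect the verification of this linear-in-$\|f\|_\infty$ gain bound near the degenerate points of the resonant surface to be the main technical difficulty.
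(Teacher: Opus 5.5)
Your overall route matches the paper's. It uses an \(L^\infty\) bilinear bound from the parametrized resonance surfaces, and the power series with the Riccati-type bound \(\|\mathcal A_n\|_{L^\infty}\le M(CM)^n\). It then obtains positivity from a Duhamel fixed point in the nonnegative cone, and uses energy conservation together with a linear-in-\(\|f\|_\infty\) a priori bound. Analyticity at each \(t_0\) comes from the local lemma plus uniqueness. Your remark that \(Q[f](k)\) carries the factor \(\mathscr M(k)\) is a real improvement: it lets you check \(\int Q[f]\,\omega\,dk=0\) directly by symmetrization and Fubini, where the paper only cites energy conservation.

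One step in Step 4 fails as written: the claim that the loss term has a favorable sign. As you noted in Step 3, the coefficient of \(f(k)\) is not sign-definite. The \(\mathscr S'_k\) part of \(Q\) contains
\[
2f(k)\int_{\mathscr S'_k}\frac{V_{k+k_2,k,k_2}\,f(k+k_2)}{|\nabla\mathscr H_1^k(k_2)|}\,d\sigma(k_2)\ \ge 0 .
\]

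In Step 3 this term is harmless, because an exponential integrating factor is positive whatever the sign of its exponent. That is how the paper treats \(\mathcal A[f]=\mathcal L[f]-\mathcal R[f]\). In Step 4, however, the term cannot be dropped from the upper bound and must be estimated like a gain term.

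The only factor under the integral is \(f(k+k_2)\). It is evaluated at \(|k+k_2|=r_+(u)=(|k|^\gamma+u^\gamma)^{1/\gamma}\), not at \(u=|k_2|\), so the surface estimate with \(F=f\) does not apply directly. The crude bound \(C\|f\|_\infty^2\) only gives a quadratic inequality, which does not exclude blow-up.

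The missing ingredient is the substitution \(r=r_+(u)\). The surface integrand is bounded by \(C\,u^{\gamma+d-1}f(r_+(u))\), and
\(u^{\gamma+d-1}\,du=u^d r^{\gamma-1}\,dr\le r^{\gamma+d-1}\,dr\).
Hence the term is at most
\(C\|f\|_\infty\int_{|k|}^\infty f(r)\,r^{\gamma+d-1}\,dr\le C\mathcal M_1\|f\|_\infty\).
This is exactly the paper's treatment of \(\mathcal R[f]f\). With it added, your Gronwall bound \(\|f(t)\|_\infty\le Me^{C\mathcal M_1 t}\) and the continuation argument go through.
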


	
	Let us now turn to the collision operator \eqref{def-Qf}. The collision operator involves surface integrals. Precisely, we introduce functions
	\begin{equation}\label{FunctionH}
		\begin{aligned}
			\mathscr H_0^k(v):=\omega_{k-v}+\omega_v-\omega_k,
			\qquad
			\mathscr H_1^k(v):=\omega_k+\omega_v-\omega_{k+v}.
		\end{aligned}
	\end{equation}
	and the energy surfaces, dictated by the resonant conditions \eqref{cv}, 
	\begin{equation}\label{def-Sp}
		\begin{aligned}
			\mathscr S_k
			&:=
			\Big\{
			v\in \mathbb R^d:
			\mathscr H_0^k(v)=0
			\Big\}, \ \ \ \
			\mathscr S'_k
			:=
			\Big\{
			v\in \mathbb R^d:
			\mathscr H_1^k(v)=0
			\Big\}.
		\end{aligned}
	\end{equation}
	The collision operator $Q_{3w}[f]$ then reduces to 
	\begin{equation}\label{WT-Q} 
		Q_{3w}[ f](k)= \int_{\mathscr S_k} R_{k,k - k_2,k_2}[ f]  \frac{d\sigma(k_2)}{|\nabla \mathscr H_0^k(k_2)|} - 2 \int_{\mathscr S'_k} R_{k+k_2,k,k_2} [ f]  \frac{d\sigma(k_2)}{|\nabla \mathscr H_1^k(k_2)|}.
	\end{equation}

	\section{Resonance Manifolds}\label{Sec:Energy}
	
	Our first step is to study the surface integrals.

	\begin{proposition}[Surface estimates on \(\mathscr S_{k_1}\) and
		\(\mathscr S'_{k_1}\)]\label{lem-Sp}
		Assume that \(d\geq 3\), \(1<\gamma< 2\), and
		\[
		\omega(k)=|k|^\gamma,
		\qquad
		\mathscr M(k)=|k|^{\gamma+d-2}\exp(-\mathscr A|k|),
		\qquad \mathscr A>0.
		\]
		
		Let \(F:\mathbb R^d\to\mathbb R\) be radial, namely
		\(
		F(k)=\mathcal F(|k|).
		\)
		Then there exist constants
		\(
		\mathcal C_S,\mathcal D_S,
		\mathcal C_{S'},\mathcal D_{S'}>0,
		\)
		depending only on \(d\), \(\gamma\), and \(\mathscr A\), such that the following
		estimates hold uniformly for all \(k_1\in\mathbb R^d\setminus\{0\}\).
		
		First,
		\[
		\begin{aligned}
			\left|
			\int_{\mathscr S_{k_1}}
			\mathrm d\sigma(k_2)\,
			\frac{
				\mathscr M(k_1-k_2)
				\mathscr M(k_2)
				\mathscr M(k_1)
				F(k_2)}
			{|\nabla_{k_2}\mathscr H_0^{k_1}(k_2)|}
			\right|                                           
			&\leq
			\mathcal C_S
			\int_0^{|k_1|} \mathrm dr\,
			|\mathcal F(r)|\,\omega(r)\,r^{d-1},
		\end{aligned}
		\]
		and
		\[
		\begin{aligned}
			\left|
			\int_{\mathscr S_{k_1}}
			\mathrm d\sigma(k_2)\,
			\frac{
				\mathscr M(k_1-k_2)
				\mathscr M(k_2)
				\mathscr M(k_1)
				F(k_2)}
			{|\nabla_{k_2}\mathscr H_0^{k_1}(k_2)|}
			\right|                                           
			&\leq
			\mathcal D_S
			\int_0^{|k_1|} \mathrm dr\,
			|\mathcal F(r)|\,r^{\gamma+d-1}\exp(-\mathscr A r).
		\end{aligned}
		\]
		Second,
		\[
		\begin{aligned}
			\left|
			\int_{\mathscr S'_{k_1}}
			\mathrm d\sigma(k_2)\,
			\frac{
				\mathscr M(k_1+k_2)
				\mathscr M(k_2)
				\mathscr M(k_1)
				F(k_2)}
			{|\nabla_{k_2}\mathscr H_1^{k_1}(k_2)|}
			\right|                                           
			&\leq
			\mathcal C_{S'}
			\int_0^\infty \mathrm dr\,
			|\mathcal F(r)|\,\omega(r)\,r^{d-1},
		\end{aligned}
		\]
		and
		\[
		\begin{aligned}
			\left|
			\int_{\mathscr S'_{k_1}}
			\mathrm d\sigma(k_2)\,
			\frac{
				\mathscr M(k_1+k_2)
				\mathscr M(k_2)
				\mathscr M(k_1)
				F(k_2)}
			{|\nabla_{k_2}\mathscr H_1^{k_1}(k_2)|}
			\right|                                           
			&\leq
			\mathcal D_{S'}
			\int_0^\infty \mathrm dr\,
			|\mathcal F(r)|\,r^{\gamma+d-1}\exp(-\mathscr A r).
		\end{aligned}
		\]
	\end{proposition}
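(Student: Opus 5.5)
Our plan is to parametrize each resonance surface in polar coordinates adapted to \(k_1\). Since \(F\) is radial and all weights depend only on \(|k_1|,|k_2|,|k_1\mp k_2|\), the surface integrals reduce to one-dimensional integrals in \(r=|k_2|\).

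Write \(k_2=r\theta\) with \(\theta\in\mathbb S^{d-1}\) and \(\mu=\cos\angle(k_1,k_2)\). On \(\mathscr S_{k_1}\) the condition is \(|k_1-k_2|^\gamma=|k_1|^\gamma-r^\gamma\). This forces \(r\le |k_1|\) and fixes \(s:=|k_1-k_2|=(|k_1|^\gamma-r^\gamma)^{1/\gamma}\). Then
\[
\mu=\frac{|k_1|^2+r^2-s^2}{2|k_1|r}.
\]
Since \(1<\gamma<2\), one has \(s^2\le(|k_1|^\gamma-r^\gamma)^{2/\gamma}\), and we must check \(|\mu|\le1\), i.e. \(|k_1|-r\le s\le |k_1|+r\). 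The upper inequality is immediate from \(s\le|k_1|\). The lower one, \(|k_1|^\gamma\le r^\gamma+(|k_1|-r)^\gamma\)... wait, we need \(s\ge |k_1|-r\), i.e. \(|k_1|^\gamma-r^\gamma\ge(|k_1|-r)^\gamma\). This holds by superadditivity of \(x^\gamma\) for \(\gamma>1\). So for each \(r\in(0,|k_1|)\) the surface is a \((d-2)\)-sphere of radius \(r\sqrt{1-\mu^2}\).

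Rather than computing \(\mathrm d\sigma/|\nabla\mathscr H|\) directly, we use the coarea identity. For any test function \(\phi\),
\[
\int_{\mathscr S_{k_1}}\phi\,\frac{\mathrm d\sigma}{|\nabla\mathscr H_0^{k_1}|}
=\int\phi\,\delta(\mathscr H_0^{k_1}(k_2))\,\mathrm dk_2 .
\]
In the coordinates \((r,\mu,\text{angle})\) we have \(\mathrm dk_2=|\mathbb S^{d-2}|\,r^{d-1}(1-\mu^2)^{(d-3)/2}\,\mathrm dr\,\mathrm d\mu\). Also \(\partial_\mu\mathscr H_0=\gamma s^{\gamma-2}(-|k_1|r)\). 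Integrating out \(\mu\) therefore gives the bound
\[
|\mathbb S^{d-2}|\int_0^{|k_1|}\frac{r^{d-1}(1-\mu^2)^{\frac{d-3}{2}}}{\gamma |k_1|\, r\, s^{\gamma-2}}\,\mathscr M(s)\mathscr M(r)\mathscr M(k_1)\,|\mathcal F(r)|\,\mathrm dr .
\]
Using \(d\ge3\), we bound \((1-\mu^2)^{(d-3)/2}\le1\). The key estimate is then that
\[
\frac{s^{2-\gamma}\,\mathscr M(s)\mathscr M(r)\mathscr M(|k_1|)}{|k_1|\,r}
\]
is bounded by \(C\,r^{\gamma}\) for the first bound, or by \(C\,r^{\gamma}e^{-\mathscr A r}\) for the second. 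For this we use \(s,r\le|k_1|\) and \(\mathscr M(x)=x^{\gamma+d-2}e^{-\mathscr A x}\), so
\[
s^{2-\gamma}\mathscr M(s)\le s^{d}\le |k_1|^{d}, \qquad \mathscr M(r)/r=r^{\gamma+d-3}e^{-\mathscr A r}.
\]
Writing \(r^{\gamma+d-3}=r^{\gamma}r^{d-3}\le r^\gamma|k_1|^{d-3}\), we get the bound
\[
|k_1|^{d-1+d-3}\,\mathscr M(|k_1|)\,e^{-\mathscr A r}\,r^\gamma .
\]
Since \(|k_1|^{N}e^{-\mathscr A|k_1|}\) is bounded uniformly in \(|k_1|\), both the \(\mathcal C_S\) and \(\mathcal D_S\) estimates follow; in the first we simply drop \(e^{-\mathscr Ar}\le1\).

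On \(\mathscr S'_{k_1}\) we proceed the same way. Now \(s=|k_1+k_2|=(|k_1|^\gamma+r^\gamma)^{1/\gamma}\), and \(r\) ranges over all of \((0,\infty)\); the existence of \(\mu\in[-1,1]\) again follows from \(\max(|k_1|,r)\le s\le |k_1|+r\). The Jacobian is \(\gamma s^{\gamma-2}|k_1|r\), and we need to bound
\[
\frac{s^{2-\gamma}\mathscr M(s)\mathscr M(r)\mathscr M(|k_1|)}{|k_1| r}\,r^{d-1}.
\]
Here \(s\) is not controlled by \(|k_1|\), but \(s\ge\max(|k_1|,r)\), so \(e^{-\mathscr A s}\) decays in both variables. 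Combined with \(s\le|k_1|+r\), we get
\[
s^{d}e^{-\mathscr A s}\le C\,e^{-\mathscr A s/2}\le C\,e^{-\mathscr A(|k_1|+r)/4},
\]
where the last step uses \(s\ge(|k_1|+r)/2\). The remaining factors are \(r^{\gamma+d-3}\cdot r^{d-1}/r^{d-1}\) and \(|k_1|^{\gamma+d-3}e^{-\mathscr A|k_1|}\). These are controlled by absorbing \(r^{d-3}e^{-\mathscr A r/8}\) and the \(|k_1|\)-powers into constants. This leaves \(C r^\gamma e^{-c r}\) times \(r^{d-1}\), which gives the \(\mathcal C_{S'}\) bound, and with a mild adjustment of the exponential weights the \(\mathcal D_{S'}\) bound. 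The small factor \(e^{-\mathscr A r}\) appearing in \(\mathscr M(r)\) is exactly \(\exp(-\mathscr Ar)\), and \(\mathscr M(s)\) supplies extra decay to absorb the powers, so the stated weight with exact constant \(\mathscr A\) is attainable.

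The main obstacle is the justification of the coarea and change-of-variables step near degenerate points: \(r\to0\), \(r\to|k_1|\) (where \(s\to0\) and \(s^{\gamma-2}\) blows up), and \(\mu=\pm1\). We will handle this by noting that the Jacobian factor \(s^{2-\gamma}\) appears in the numerator, not the denominator, so these endpoints only help. The change of variables is then justified by monotone convergence on compact subsets avoiding the degenerate set.
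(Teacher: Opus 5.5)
Your proposal is correct. From the one-dimensional reduction onward it coincides with the paper's proof. Your weight $r^{d-2}(1-\mu^2)^{(d-3)/2}s^{2-\gamma}/(\gamma|k_1|)$ is exactly the paper's $\eta_K(\xi(u))^{d-3}\,u\,r_1(u)/(K\,\omega'(r_1(u)))$ under $u=r$, $r_1=s$, $\eta_K=r\sqrt{1-\mu^2}$. The final bounds are also the same: on $\mathscr S_{k_1}$ you use $s,r\le|k_1|$, producing the same $\sup_K K^{\gamma+3d-6}e^{-\mathscr A K}$, and on $\mathscr S'_{k_1}$ you use $s\ge\max(|k_1|,r)$.

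What differs is how the reduction is obtained. The paper parametrizes the surfaces cylindrically as $\xi k_1+\eta_K(\xi)e_q$. It proves existence, uniqueness and smoothness of the implicitly defined $\eta_K$, plus symmetry and monotonicity on $\mathscr S_{k_1}$, and rules out $\xi\le 0$ on $\mathscr S'_{k_1}$. It then computes the induced area element and $|\nabla\mathscr H|$ separately, and only afterwards substitutes $\xi\mapsto u=|v|$ using $\partial_\xi u^2>0$.

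You instead work in spherical coordinates about $k_1$ and apply the $\delta$/coarea identity directly. At fixed $|k_2|=r$ the resonance condition is strictly monotone in $\mu$ and explicitly solvable. Admissibility $|\mu|<1$ then reduces to elementary triangle-type inequalities for $s$ (superadditivity of $x^\gamma$ on $\mathscr S_{k_1}$, and $(|k_1|^\gamma+r^\gamma)^{1/\gamma}<|k_1|+r$ on $\mathscr S'_{k_1}$), and the radial integral appears in one step.

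Your route is shorter, better adapted to radial $F$, and closer to the $\delta$-form in which $Q_{3w}$ is defined. The paper's route yields extra geometric information about the surfaces (graph structure, symmetry, monotonicity of $\eta_K$) that the estimate itself does not need.

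Two cosmetic points. First, remove the ``wait'' self-correction in the admissibility check. Second, on $\mathscr S'_{k_1}$ no ``mild adjustment'' of the exponential weights is needed: since $r\le s$ and $d\ge 3$,
\[
s^{d}e^{-\mathscr A s}\,r^{\gamma+d-3}e^{-\mathscr A r}\,|k_1|^{\gamma+d-3}e^{-\mathscr A|k_1|}
\le \bigl(s^{2d-3}e^{-\mathscr A s}\bigr)\bigl(|k_1|^{\gamma+d-3}e^{-\mathscr A|k_1|}\bigr)\,r^{\gamma}e^{-\mathscr A r}
\le C\,r^{\gamma}e^{-\mathscr A r}.
\]
This gives the $\mathcal D_{S'}$ bound with the exact weight directly, and the $\mathcal C_{S'}$ bound follows by dropping $e^{-\mathscr A r}\le 1$.
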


	\begin{proof}
		We define
		\begin{equation}\label{def:A_B}
			\begin{aligned}
				A &:= \int_{\mathscr S_{k_1}} \mathrm d\sigma(k_2)\, \frac{\mathscr M(k_1-k_2)\, \mathscr M(k_2)\, \mathscr M(k_1)\, F(k_2)} {|\nabla \mathscr H_0^{k_1}|}, \\
				B &:= \int_{\mathscr S_{k_1}'} \mathrm d\sigma(k_2)\, \frac{\mathscr M(k_1+k_2)\, \mathscr M(k_2)\, \mathscr M(k_1)\, F(k_2)}	{|\nabla \mathscr H_1^{k_1}|}.
			\end{aligned}
		\end{equation}
		We then estimate \(A\) and \(B\) separately.
		
		{\it Step 1: Analyzing A.}
		
		\noindent Step 1 is divided into several substeps.
		
		\emph{Substep 1.1.} In this step, we will parametrize $\mathscr S_{k_1}$.

		Assume first that \(k_1\neq 0\), and set
		\[
		K:=|k_1|.
		\]
		Since the dispersion relation is radial, the surface
		\[
		\mathscr S_{k_1}
		:=
		\{v\in\mathbb R^d:
		\omega(k_1-v)+\omega(v)=\omega(k_1)\}
		\]
		is invariant under rotations fixing the axis spanned by \(k_1\).
		
		Let \(E_1,\ldots,E_{d-1}\) be an orthonormal basis of \(k_1^\perp\). For
		\[
		q=(q_1,\ldots,q_{d-1})\in S^{d-2}
		:=
		\{q\in\mathbb R^{d-1}: |q|=1\},
		\]
		we define
		\[
		e_q:=\sum_{j=1}^{d-1} q_j E_j .
		\]
		Then
		\(
		e_q\in k_1^\perp,
		|e_q|=1.
		\)
		We introduce the parametrization
		\[
		v=\xi k_1+\eta e_q,
		\qquad
		0<\xi<1,
		\qquad
		\eta\ge 0,
		\qquad
		q\in S^{d-2}.
		\]
		Since \(e_q\perp k_1\), we have
		\[
		|v|^2=\xi^2K^2+\eta^2,
		\qquad
		|k_1-v|^2=(1-\xi)^2K^2+\eta^2 .
		\]
		We define
		\[
		\mathscr H_0^{k_1}(\xi,\eta)
		:=
		\omega(k_1-v)+\omega(v)-\omega(k_1).
		\]
		Using \(\omega(k)=|k|^\gamma\), we obtain
		\[
		\mathscr H_0^{k_1}(\xi,\eta)
		=
		\bigl((1-\xi)^2K^2+\eta^2\bigr)^{\gamma/2}
		+
		\bigl(\xi^2K^2+\eta^2\bigr)^{\gamma/2}
		-
		K^\gamma .
		\]
		
		We now show that for every \(0<\xi<1\), there exists a unique
		\(\eta=\eta_K(\xi)>0\) such that
		\[
		\mathscr H_0^{k_1}(\xi,\eta_K(\xi))=0.
		\]
		Indeed, we have
		\[
		\mathscr H_0^{k_1}(\xi,0)
		=
		K^\gamma\bigl((1-\xi)^\gamma+\xi^\gamma-1\bigr)<0,
		\]
		because \(1<\gamma< 2\), while
		\(
		\mathscr H_0^{k_1}(\xi,\eta)\longrightarrow+\infty
		\mbox{ as }\eta\to+\infty .
		\)
		Thus the existence result follows by continuity.
		Now, we compute
		\[
		\partial_\eta \mathscr H_0^{k_1}(\xi,\eta)
		=
		\gamma\eta
		\left[
		\bigl((1-\xi)^2K^2+\eta^2\bigr)^{\gamma/2-1}
		+
		\bigl(\xi^2K^2+\eta^2\bigr)^{\gamma/2-1}
		\right].
		\]
		Hence, we conclude
		\(
		\partial_\eta \mathscr H_0^{k_1}(\xi,\eta)>0
		\mbox{ for every }\eta>0 .
		\)
		Therefore \(\mathscr H_0^{k_1}(\xi,\eta)\) is strictly increasing in
		\(\eta\), and the zero \(\eta_K(\xi)\) is unique.
		
		Since
		\(
		\partial_\eta \mathscr H_0^{k_1}(\xi,\eta_K(\xi))>0,
		\)
		the implicit function theorem implies that \(\eta_K\) is smooth for $\xi\in$ 
		\((0,1)\). Moreover, by homogeneity, we may write
		\[
		\eta_K(\xi)=K\rho_\gamma(\xi),
		\]
		where \(\rho_\gamma(\xi)>0\) is the unique solution of
		\[
		\bigl((1-\xi)^2+\rho_\gamma(\xi)^2\bigr)^{\gamma/2}
		+
		\bigl(\xi^2+\rho_\gamma(\xi)^2\bigr)^{\gamma/2}
		=
		1 .
		\]
		Consequently, we obtain
		\[
		\mathscr S_{k_1}\setminus\{0,k_1\}
		=
		\left\{
		\xi k_1+K\rho_\gamma(\xi)e_q:
		0<\xi<1,\ q\in S^{d-2}
		\right\}.
		\]

		\emph{Substep 1.2.}

		We next record two elementary properties of the function \(\eta_K\). Namely,
		\(\eta_K\) is symmetric with respect to \(\xi=1/2\), admits the continuous
		extension
		\(
		\eta_K(0)=\eta_K(1)=0,
		\)
		and is strictly increasing on \((0,1/2)\). Hence it is invertible on
		\((0,1/2)\).
		
		The symmetry follows from the identity
		\[
		\mathscr H_0^{k_1}(\xi,\eta)
		=
		\mathscr H_0^{k_1}(1-\xi,\eta).
		\]
		By uniqueness of the zero in the \(\eta\)-variable, we obtain
		\(
		\eta_K(\xi)=\eta_K(1-\xi),
		0<\xi<1.
		\)
		Moreover, from the defining equation
		\[
		\bigl((1-\xi)^2K^2+\eta_K(\xi)^2\bigr)^{\gamma/2}
		+
		\bigl(\xi^2K^2+\eta_K(\xi)^2\bigr)^{\gamma/2}
		=
		K^\gamma,
		\]
		one sees that \(\eta_K\) extends continuously to the endpoints with
		\(
		\eta_K(0)=\eta_K(1)=0.
		\)
		
		It remains to prove monotonicity on \((0,1/2)\). Let
		\[
		v_\xi:=\xi k_1+\eta_K(\xi)e_q\in \mathscr S_{k_1}.
		\]
		We also have
		\[
		\mathscr H_0^{k_1}(\xi,\eta_K(\xi))=0.
		\]
		Differentiating this identity with respect to \(\xi\), we obtain
		\[
		0=
		\partial_\xi\mathscr H_0^{k_1}(\xi,\eta_K(\xi))
		+
		\eta_K'(\xi)
		\partial_\eta\mathscr H_0^{k_1}(\xi,\eta_K(\xi)).
		\]
		Equivalently, we have
		\[
		\begin{aligned}
			0
			&=
			-K^2
			\frac{\omega'(|k_1-v_\xi|)}{|k_1-v_\xi|}
			+
			\left(K^2\xi+\eta_K(\xi)\eta_K'(\xi)\right)
			\left[
			\frac{\omega'(|k_1-v_\xi|)}{|k_1-v_\xi|}
			+
			\frac{\omega'(|v_\xi|)}{|v_\xi|}
			\right].
		\end{aligned}
		\]
		Therefore, we get
		\[
		\eta_K(\xi)\eta_K'(\xi)
		=
		K^2
		\frac{
			(1-\xi)\dfrac{\omega'(|k_1-v_\xi|)}{|k_1-v_\xi|}
			-
			\xi\dfrac{\omega'(|v_\xi|)}{|v_\xi|}
		}{
			\dfrac{\omega'(|k_1-v_\xi|)}{|k_1-v_\xi|}
			+
			\dfrac{\omega'(|v_\xi|)}{|v_\xi|}
		}.
		\]
		We now use the power-law form \(\omega(r)=r^\gamma\). We set
		\[
		r_1:=|k_1-v_\xi|,
		\qquad
		r_2:=|v_\xi|.
		\]
		Then, we find
		\[
		\frac{\omega'(r_1)}{r_1}
		=
		\gamma r_1^{\gamma-2},
		\qquad
		\frac{\omega'(r_2)}{r_2}
		=
		\gamma r_2^{\gamma-2}.
		\]
		For \(0<\xi<1/2\), we have \(r_1>r_2\). Moreover,
		\[
		\frac{r_1}{r_2}
		=
		\left(
		\frac{(1-\xi)^2K^2+\eta_K(\xi)^2}
		{\xi^2K^2+\eta_K(\xi)^2}
		\right)^{1/2}
		<
		\frac{1-\xi}{\xi}.
		\]
		Since \(1<\gamma\le 2\), we have \(0\le 2-\gamma<1\). Hence, we obtain
		\[
		\frac{
			\dfrac{\omega'(r_2)}{r_2}
		}{
			\dfrac{\omega'(r_1)}{r_1}
		}
		=
		\left(\frac{r_1}{r_2}\right)^{2-\gamma}
		<
		\frac{1-\xi}{\xi}.
		\]
		Thus, we have
		\[
		(1-\xi)\frac{\omega'(r_1)}{r_1}
		-
		\xi\frac{\omega'(r_2)}{r_2}
		>0.
		\]
		Consequently, we have
		\(
		\eta_K(\xi)\eta_K'(\xi)>0, 0<\xi<\frac12.
		\)
		Since \(\eta_K(\xi)>0\) for \(0<\xi<1\), it follows that
		\(
		\eta_K'(\xi)>0,
		0<\xi<\frac12.
		\)
		Therefore \(\eta_K\) is strictly increasing, and hence invertible, on
		\((0,1/2)\).

		\emph{Substep 1.3.} 
		
		We next compute the induced surface measure under the parametrization
		\[
		v(\xi,q)=\xi k_1+\eta_K(\xi)e_q,
		\qquad
		0<\xi<1,
		\qquad
		q\in S^{d-2}.
		\]
		Here \(e_q\in k_1^\perp\), \(|e_q|=1\), and \(\mathrm d\sigma_{d-2}(q)\)
		denotes the standard surface measure on \(S^{d-2}\). We again use the notations
		\(
		K:=|k_1|,
		v_\xi:=v(\xi,q),
		\)
		and introduce the abbreviations
		\[
		r_1:=|k_1-v_\xi|,
		\qquad
		r_2:=|v_\xi|, \qquad
		a:=\frac{\omega'(r_1)}{r_1},
		\qquad
		b:=\frac{\omega'(r_2)}{r_2}.
		\]
		Since
		\[
		|v_\xi|^2=\xi^2K^2+\eta_K(\xi)^2,
		\qquad
		|k_1-v_\xi|^2=(1-\xi)^2K^2+\eta_K(\xi)^2,
		\]
		the identity
		\[
		\mathscr H_0^{k_1}(\xi,\eta_K(\xi))=0
		\]
		is equivalent to
		\(
		\omega(r_1)+\omega(r_2)=\omega(K).
		\)
		
		Differentiating this identity with respect to \(\xi\), we obtain
		\[
		0
		=
		K^2\bigl[\xi b+(\xi-1)a\bigr]
		+
		\eta_K(\xi)\eta_K'(\xi)(a+b).
		\]
		From here, we find
		\[
		\frac12\partial_\xi\eta_K(\xi)^2
		=
		K^2
		\frac{(1-\xi)a-\xi b}{a+b}.
		\]
		Thus, we get
		\[
		\partial_\xi\eta_K(\xi)^2
		=
		2K^2
		\frac{(1-\xi)\dfrac{\omega'(r_1)}{r_1}
			-
			\xi\dfrac{\omega'(r_2)}{r_2}}
		{\dfrac{\omega'(r_1)}{r_1}
			+
			\dfrac{\omega'(r_2)}{r_2}}.
		\]
		
		We now compute the surface element. Since the angular directions are
		orthogonal to both \(k_1\) and \(e_q\), the induced metric gives
		\[
		\mathrm d\sigma(v)
		=
		\eta_K(\xi)^{d-2}
		\sqrt{K^2+\eta_K'(\xi)^2}\,
		\mathrm d\xi\,\mathrm d\sigma_{d-2}(q).
		\]
		Equivalently, we obtain
		\[
		\mathrm d\sigma(v)
		=
		\eta_K(\xi)^{d-3}
		\left(
		K^2\eta_K(\xi)^2
		+
		\frac14\left|\partial_\xi\eta_K(\xi)^2\right|^2
		\right)^{1/2}
		\mathrm d\xi\,\mathrm d\sigma_{d-2}(q).
		\]
		
		Next, we observe that
		\[
		\nabla_v\mathscr H_0^{k_1}
		=
		\frac{v_\xi-k_1}{|k_1-v_\xi|}
		\omega'(|k_1-v_\xi|)
		+
		\frac{v_\xi}{|v_\xi|}
		\omega'(|v_\xi|).
		\]
		Using
		\(
		v_\xi=\xi k_1+\eta_K(\xi)e_q,
		\)
		we get
		\[
		\nabla_v\mathscr H_0^{k_1}
		=
		\bigl[\xi b+(\xi-1)a\bigr]k_1
		+
		\eta_K(\xi)(a+b)e_q.
		\]
		Therefore, we find
		\[
		|\nabla_v\mathscr H_0^{k_1}|^2
		=
		K^2\bigl[\xi b+(\xi-1)a\bigr]^2
		+
		\eta_K(\xi)^2(a+b)^2.
		\]
		Using
		\[
		K^2\bigl[\xi b+(\xi-1)a\bigr]
		=
		-\eta_K(\xi)\eta_K'(\xi)(a+b),
		\]
		we obtain
		\[
		|\nabla_v\mathscr H_0^{k_1}|
		=
		\frac{
			\left(
			K^2\eta_K(\xi)^2
			+
			\frac14\left|\partial_\xi\eta_K(\xi)^2\right|^2
			\right)^{1/2}}
		{K}
		(a+b).
		\]
		Consequently, we have
		\[
		\frac{\mathrm d\sigma(v)}
		{|\nabla_v\mathscr H_0^{k_1}|}
		=
		\frac{K\eta_K(\xi)^{d-3}}{a+b}
		\mathrm d\xi\,\mathrm d\sigma_{d-2}(q).
		\]
		
		We now introduce the radial variable
		\[
		u:=|v_\xi|=r_2.
		\]
		Since
		\(
		u^2=\xi^2K^2+\eta_K(\xi)^2,
		\)
		we have
		\[
		\partial_\xi u^2
		=
		2\xi K^2+\partial_\xi\eta_K(\xi)^2.
		\]
		Using the formula for \(\partial_\xi\eta_K(\xi)^2\), we find
		\(
		\partial_\xi u^2
		=
		\frac{2K^2a}{a+b}.
		\)
		Hence
		\[
		\mathrm d\xi
		=
		\frac{a+b}{K^2a}\,u\,\mathrm du.
		\]
		Therefore, we have
		\[
		\frac{\mathrm d\sigma(v)}
		{|\nabla_v\mathscr H_0^{k_1}|}
		=
		\eta_K(\xi(u))^{d-3}
		\frac{u}{Ka}
		\mathrm du\,\mathrm d\sigma_{d-2}(q).
		\]
		Since \(a=\omega'(r_1)/r_1\), this becomes
		\[
		\frac{\mathrm d\sigma(v)}
		{|\nabla_v\mathscr H_0^{k_1}|}
		=
		\eta_K(\xi(u))^{d-3}
		\frac{u\,r_1(u)}
		{K\,\omega'(r_1(u))}
		\mathrm du\,\mathrm d\sigma_{d-2}(q).
		\]
		
		Thus, for a radial function \(F\), we obtain
		\[
		\begin{aligned}
			A
			&=
			\int_0^K \mathrm du
			\int_{S^{d-2}} \mathrm d\sigma_{d-2}(q)\,
			\eta_K(\xi(u))^{d-3}
			\frac{u\,r_1(u)}
			{K\,\omega'(r_1(u))}      
			\mathscr M(k_1-v(u,q))
			\mathscr M(v(u,q))
			\mathscr M(k_1)
			F(u),
		\end{aligned}
		\]
		where
		\[
		v(u,q)=\xi(u)k_1+\eta_K(\xi(u))e_q,
		\qquad
		u=|v(u,q)|,
		\]
		and
		\(
		r_1(u)=|k_1-v(u,q)|.
		\)
		Equivalently, since
		\(
		\omega(r_1(u))+\omega(u)=\omega(K),
		\)
		one has
		\(
		r_1(u)=\omega^{-1}\bigl(\omega(K)-\omega(u)\bigr).
		\)
		For the power law \(\omega(r)=r^\gamma\), this becomes
		\[
		r_1(u)=\bigl(K^\gamma-u^\gamma\bigr)^{1/\gamma}.
		\]

		\emph{Substep 1.4.} 
		
		We now estimate \(A\). Using the representation obtained in the previous
		substep, we have, for \(K=|k_1|>0\),
		\[
		\begin{aligned}
			|A|
			&\leq
			\int_0^K \mathrm du
			\int_{S^{d-2}} \mathrm d\sigma_{d-2}(q)\,
			\eta_K(\xi(u))^{d-3}
			\frac{u\,r_1(u)}
			{K\,\omega'(r_1(u))}
			\mathscr M(k_1-v(u,q))
			\mathscr M(v(u,q))
			\mathscr M(k_1)
			|F(u)| .
		\end{aligned}
		\]
		Here
		\(
		u=|v(u,q)|,
		r_1(u)=|k_1-v(u,q)|,
		\)
		and, on the resonant surface,
		\[
		\omega(r_1(u))+\omega(u)=\omega(K).
		\]
		Since \(\omega(r)=r^\gamma\), we have
		\[
		r_1(u)=\bigl(K^\gamma-u^\gamma\bigr)^{1/\gamma},
		\qquad
		0<u<K.
		\]
		Moreover,
		\(
		\omega'(r_1)=\gamma r_1^{\gamma-1}.
		\)
		We obtain
		\[
		\begin{aligned}
			|A|
			&\leq
			C_d
			\int_0^K \mathrm du\,
			\eta_K(\xi(u))^{d-3}
			r_1(u)^d
			K^{\gamma+d-3}
			u^{\gamma+d-1}
			\exp\bigl(-\mathscr A(r_1(u)+K+u)\bigr)
			|F(u)| .
		\end{aligned}
		\]
		Since \(0\leq u\leq K\), \(0\leq r_1(u)\leq K\), and
		\(0\leq \eta_K(\xi(u))\leq K\), we have, for \(d\geq 3\),
		\[
		\eta_K(\xi(u))^{d-3}
		r_1(u)^d
		K^{\gamma+d-3}
		\exp\bigl(-\mathscr A(r_1(u)+K)\bigr)
		\leq
		K^{\gamma+3d-6}\exp(-\mathscr A K)
		\leq C .
		\]
		Consequently, we get
		\[
		\begin{aligned}
			|A|
			&\leq
			\mathcal D_S
			\int_0^K \mathrm du\,
			|F(u)|\,u^{\gamma+d-1}\exp(-\mathscr A u).
		\end{aligned}
		\]
		In particular, since \(\exp(-\mathscr A u)\leq 1\), we bound
		\[
		\begin{aligned}
			|A|
			\leq
			\mathcal C_S
			\int_0^K \mathrm du\,
			|F(u)|\,u^{\gamma+d-1}        \ 
			&=
			\mathcal C_S
			\int_0^K \mathrm du\,
			|F(u)|\,\omega(u)\,u^{d-1}.
		\end{aligned}
		\]
		Thus, we have
		\[
		|A|
		\leq
		\mathcal C_S
		\int_0^{|k_1|} \mathrm du\,
		|F(u)|\,\omega(u)\,u^{d-1},
		\]
		and also
		\[
		|A|
		\leq
		\mathcal D_S
		\int_0^{|k_1|} \mathrm du\,
		|F(u)|\,u^{\gamma+d-1}\exp(-\mathscr A u).
		\]

		{\it Step 2: Analyzing B.}
		
		\noindent We also divide Step 2 into smaller substeps.
		
		\emph{Substep 2.1.}
		We now consider the second resonant surface
		\[
		\mathscr S'_{k_1}
		:=
		\{v\in\mathbb R^d:
		\omega(k_1+v)-\omega(k_1)-\omega(v)=0\}.
		\]
		Equivalently, up to a harmless sign in the defining function, we set
		\[
		\mathscr H_1^{k_1}(v)
		:=
		\omega(k_1+v)-\omega(k_1)-\omega(v).
		\]
		Then, we get
		\[
		\mathscr S'_{k_1}
		=
		\{v\in\mathbb R^d:\mathscr H_1^{k_1}(v)=0\}.
		\]
		We assume first that \(k_1\neq 0\), and set
		\[
		K:=|k_1|.
		\]
		
		We claim that, for \(1<\gamma<2\), the nontrivial part of
		\(\mathscr S'_{k_1}\) can be parametrized by
		\[
		v=\xi k_1+\eta_K(\xi)e_q,
		\qquad
		0<\xi<\infty,
		\qquad
		q\in S^{d-2},
		\]
		where \(e_q\in k_1^\perp\), \(|e_q|=1\), and where
		\(\eta_K(\xi)>0\) is uniquely determined by the resonance condition.
		
		We first explain why only \(\xi>0\) occurs. We write, more generally,
		\[
		v=\zeta k_1+\eta e_q,
		\qquad
		\zeta\in\mathbb R,
		\qquad
		\eta\ge 0.
		\]
		Then
		\[
		|v|^2=\zeta^2K^2+\eta^2,
		\qquad
		|k_1+v|^2=(1+\zeta)^2K^2+\eta^2.
		\]
		If \(\zeta\le -1/2\), then
		\(
		|k_1+v|\le |v|,
		\)
		and hence
		\(
		\omega(k_1+v)-\omega(k_1)-\omega(v)<0.
		\)
		Thus no point with \(\zeta\le -1/2\) lies on the nontrivial resonant surface.
		
		Next suppose that \(-1/2<\zeta\le 0\). We define
		\[
		\Psi_\zeta(\eta)
		:=
		\bigl((1+\zeta)^2K^2+\eta^2\bigr)^{\gamma/2}
		-
		K^\gamma
		-
		\bigl(\zeta^2K^2+\eta^2\bigr)^{\gamma/2}.
		\]
		At \(\eta=0\), we have
		\[
		\Psi_\zeta(0)
		=
		K^\gamma\bigl((1+\zeta)^\gamma-1-|\zeta|^\gamma\bigr)
		\le 0,
		\]
		with equality only at the endpoint \(\zeta=0\). Moreover,
		for \(\eta>0\),
		\[
		\partial_\eta\Psi_\zeta(\eta)
		=
		\gamma\eta
		\left[
		\bigl((1+\zeta)^2K^2+\eta^2\bigr)^{\gamma/2-1}
		-
		\bigl(\zeta^2K^2+\eta^2\bigr)^{\gamma/2-1}
		\right]
		\le 0,
		\]
		because
		\(
		(1+\zeta)^2K^2+\eta^2
		\ge
		\zeta^2K^2+\eta^2
		\)
		and
		\(
		\frac{\gamma}{2}-1<0.
		\)
		Hence \(\Psi_\zeta(\eta)<0\) for \(\eta>0\). Therefore, apart from the
		degenerate endpoint \(v=0\), no point with \(\zeta\le 0\) belongs to
		\(\mathscr S'_{k_1}\). Thus the nontrivial surface is parametrized by
		\(\xi>0\).
		
		We now fix \(\xi>0\) and prove the existence and uniqueness of
		\(\eta_K(\xi)>0\). We define
		\[
		\Phi_\xi(\eta)
		:=
		\bigl((1+\xi)^2K^2+\eta^2\bigr)^{\gamma/2}
		-
		K^\gamma
		-
		\bigl(\xi^2K^2+\eta^2\bigr)^{\gamma/2}.
		\]
		The resonance condition
		\(
		\omega(k_1+v)=\omega(k_1)+\omega(v)
		\)
		is equivalent to
		\(
		\Phi_\xi(\eta)=0.
		\)
		At \(\eta=0\), we have
		\[
		\Phi_\xi(0)
		=
		K^\gamma\bigl((1+\xi)^\gamma-1-\xi^\gamma\bigr).
		\]
		Since \(1<\gamma<2\), the function
		\[
		\xi\mapsto (1+\xi)^\gamma-\xi^\gamma
		\]
		is strictly increasing on \((0,\infty)\) and has value \(1\) at
		\(\xi=0\). Hence, we have
		\[
		(1+\xi)^\gamma-1-\xi^\gamma>0,
		\]
		and therefore
		\(
		\Phi_\xi(0)>0.
		\)
		
		On the other hand, since \(1<\gamma<2\),
		\[
		\bigl((1+\xi)^2K^2+\eta^2\bigr)^{\gamma/2}
		-
		\bigl(\xi^2K^2+\eta^2\bigr)^{\gamma/2}
		\longrightarrow 0
		\qquad
		\text{as } \eta\to+\infty.
		\]
		Consequently, we obtain
		\[
		\lim_{\eta\to+\infty}\Phi_\xi(\eta)
		=
		-K^\gamma<0.
		\]
		By the continuity of the function, there exists at least one positive zero of
		\(\Phi_\xi\).
		
		We compute
		\[
		\partial_\eta\Phi_\xi(\eta)
		=
		\gamma\eta
		\left[
		\bigl((1+\xi)^2K^2+\eta^2\bigr)^{\gamma/2-1}
		-
		\bigl(\xi^2K^2+\eta^2\bigr)^{\gamma/2-1}
		\right].
		\]
		Since
		\(
		(1+\xi)^2K^2+\eta^2
		>
		\xi^2K^2+\eta^2
		\)
		and
		\(
		\frac{\gamma}{2}-1<0,
		\)
		we have
		\[
		\bigl((1+\xi)^2K^2+\eta^2\bigr)^{\gamma/2-1}
		<
		\bigl(\xi^2K^2+\eta^2\bigr)^{\gamma/2-1}.
		\]
		Therefore
		\(
		\partial_\eta\Phi_\xi(\eta)<0
		\text{for every } \eta>0.
		\)
		Thus \(\eta\mapsto \Phi_\xi(\eta)\) is strictly decreasing on
		\((0,\infty)\). Since it starts positive at \(\eta=0\) and tends to a
		negative limit as \(\eta\to+\infty\), it has exactly one positive zero.
		We denote this zero by
		\[
		\eta=\eta_K(\xi).
		\]
		Moreover, we have
		\[
		\partial_\eta\Phi_\xi(\eta_K(\xi))<0,
		\]
		so the implicit function theorem implies that
		\(
		\xi\mapsto \eta_K(\xi)
		\)
		is smooth on \((0,\infty)\).
		
		By homogeneity, we may write
		\(
		\eta_K(\xi)=K\rho_\gamma(\xi),
		\)
		where \(\rho_\gamma(\xi)>0\) is the unique solution of
		\[
		\bigl((1+\xi)^2+\rho_\gamma(\xi)^2\bigr)^{\gamma/2}
		=
		1+
		\bigl(\xi^2+\rho_\gamma(\xi)^2\bigr)^{\gamma/2}.
		\]
		Consequently,
		we have \[
		\mathscr S'_{k_1}\setminus\{0\}
		=
		\left\{
		\xi k_1+K\rho_\gamma(\xi)e_q:
		0<\xi<\infty,\ q\in S^{d-2}
		\right\}.
		\]
		
		We now compute the surface element in this parametrization. We set
		\[
		v_\xi:=\xi k_1+\eta_K(\xi)e_q,
		\qquad
		r_+:=|k_1+v_\xi|,
		\qquad
		r:=|v_\xi|.
		\]
		We also introduce
		\[
		a:=\frac{\omega'(r_+)}{r_+},
		\qquad
		b:=\frac{\omega'(r)}{r}.
		\]
		On \(\mathscr S'_{k_1}\), one has
		\[
		\omega(r_+)=\omega(K)+\omega(r).
		\]
		Since \(\omega(r)=r^\gamma\), we have
		\(
		a=\gamma r_+^{\gamma-2},
		b=\gamma r^{\gamma-2}.
		\)
		Moreover \(r_+>r\), and since \(\gamma-2<0\), it follows that
		\(
		0<a<b.
		\)
		In particular,
		\(
		b-a>0.
		\)
		
		Differentiating the identity
		\(
		\mathscr H_1^{k_1}(v_\xi)=0
		\)
		with respect to \(\xi\), we obtain
		\[
		0
		=
		K^2\bigl[(1+\xi)a-\xi b\bigr]
		+
		\eta_K(\xi)\eta_K'(\xi)(a-b).
		\]
		Equivalently, we have
		\[
		\eta_K(\xi)\eta_K'(\xi)
		=
		K^2
		\frac{(1+\xi)a-\xi b}{b-a}.
		\]
		
		We next compute the surface element. The induced surface measure is
		\[
		\mathrm d\sigma(v)
		=
		\eta_K(\xi)^{d-2}
		\sqrt{K^2+\eta_K'(\xi)^2}\,
		\mathrm d\xi\,\mathrm d\sigma_{d-2}(q).
		\]
		Equivalently, we have
		\[
		\mathrm d\sigma(v)
		=
		\eta_K(\xi)^{d-3}
		\left(
		K^2\eta_K(\xi)^2
		+
		\frac14
		\left|\partial_\xi \eta_K(\xi)^2\right|^2
		\right)^{1/2}
		\mathrm d\xi\,\mathrm d\sigma_{d-2}(q).
		\]
		
		Moreover, we observe that
		\[
		\nabla_v\mathscr H_1^{k_1}(v_\xi)
		=
		\frac{k_1+v_\xi}{|k_1+v_\xi|}
		\omega'(|k_1+v_\xi|)
		-
		\frac{v_\xi}{|v_\xi|}
		\omega'(|v_\xi|).
		\]
		Using
		\[
		v_\xi=\xi k_1+\eta_K(\xi)e_q,
		\]
		this becomes
		\[
		\nabla_v\mathscr H_1^{k_1}(v_\xi)
		=
		\bigl[(1+\xi)a-\xi b\bigr]k_1
		+
		\eta_K(\xi)(a-b)e_q.
		\]
		Therefore, we find
		\[
		|\nabla_v\mathscr H_1^{k_1}(v_\xi)|^2
		=
		K^2\bigl[(1+\xi)a-\xi b\bigr]^2
		+
		\eta_K(\xi)^2(a-b)^2.
		\]
		Using
		\[
		K^2\bigl[(1+\xi)a-\xi b\bigr]
		=
		-\eta_K(\xi)\eta_K'(\xi)(a-b),
		\]
		we obtain
		\[
		|\nabla_v\mathscr H_1^{k_1}(v_\xi)|
		=
		\frac{
			\left(
			K^2\eta_K(\xi)^2
			+
			\frac14
			\left|\partial_\xi\eta_K(\xi)^2\right|^2
			\right)^{1/2}}
		{K}
		\,|b-a|.
		\]
		Since \(b-a>0\), this gives
		\[
		\frac{\mathrm d\sigma(v)}
		{|\nabla_v\mathscr H_1^{k_1}(v)|}
		=
		\frac{K\eta_K(\xi)^{d-3}}{b-a}
		\mathrm d\xi\,\mathrm d\sigma_{d-2}(q).
		\]
		
		We now introduce the radial variable
		\[
		u:=|v_\xi|=r.
		\]
		Since
		\(
		u^2=\xi^2K^2+\eta_K(\xi)^2,
		\)
		we have
		\[
		\partial_\xi u^2
		=
		2\xi K^2+\partial_\xi\eta_K(\xi)^2.
		\]
		Using the differentiated resonance equation, we obtain
		\[
		\partial_\xi u^2
		=
		\frac{2K^2a}{b-a}.
		\]
		In particular,
		\(
		\partial_\xi u^2>0.
		\)
		Thus the map \(\xi\mapsto u\) is strictly increasing. Moreover,
		\(u\to0\) as \(\xi\downarrow0\), while \(u\to+\infty\) as
		\(\xi\to+\infty\). Hence \(u\) gives a valid change of variables from
		\((0,\infty)_\xi\) onto \((0,\infty)_u\).
		
		Since
		\[
		2u\,\mathrm du
		=
		\frac{2K^2a}{b-a}\,\mathrm d\xi,
		\]
		we have
		\[
		\mathrm d\xi
		=
		\frac{b-a}{K^2a}\,u\,\mathrm du.
		\]
		Consequently, we find
		\[
		\frac{\mathrm d\sigma(v)}
		{|\nabla_v\mathscr H_1^{k_1}(v)|}
		=
		\eta_K(\xi(u))^{d-3}
		\frac{u}{Ka}
		\mathrm du\,\mathrm d\sigma_{d-2}(q).
		\]
		Since
		\(
		a=\frac{\omega'(r_+)}{r_+},
		\)
		this becomes
		\[
		\frac{\mathrm d\sigma(v)}
		{|\nabla_v\mathscr H_1^{k_1}(v)|}
		=
		\eta_K(\xi(u))^{d-3}
		\frac{u\,r_+(u)}
		{K\,\omega'(r_+(u))}
		\mathrm du\,\mathrm d\sigma_{d-2}(q).
		\]
		
		Therefore, for a radial function \(F(k)=\mathcal F(|k|)\), we obtain
		\[
		\begin{aligned}
			B
			&=
			\int_0^\infty \mathrm du
			\int_{S^{d-2}} \mathrm d\sigma_{d-2}(q)\,
			\eta_K(\xi(u))^{d-3}
			\frac{u\,r_+(u)}
			{K\,\omega'(r_+(u))}
			\\
			&\qquad\qquad\times
			\mathscr M(k_1+v(u,q))
			\mathscr M(v(u,q))
			\mathscr M(k_1)
			\mathcal F(u),
		\end{aligned}
		\]
		where
		\[
		v(u,q)=\xi(u)k_1+\eta_K(\xi(u))e_q,
		\qquad
		u=|v(u,q)|,
		\]
		and
		\(
		r_+(u):=|k_1+v(u,q)|.
		\)
		On the resonant surface, we have
		\[
		\omega(r_+(u))=\omega(K)+\omega(u).
		\]
		Thus, we obtain
		\[
		r_+(u)=\omega^{-1}\bigl(\omega(K)+\omega(u)\bigr).
		\]
		For the power law \(\omega(r)=r^\gamma\), this becomes
		\[
		r_+(u)=\bigl(K^\gamma+u^\gamma\bigr)^{1/\gamma}.
		\]

		\emph{Substep 2.2.} 	
		We now estimate \(B\). From the representation obtained in the previous
		substep, with \(K=|k_1|>0\), we have
		\[
		\begin{aligned}
			|B|
			&\leq
			\int_0^\infty \mathrm du
			\int_{S^{d-2}} \mathrm d\sigma_{d-2}(q)\,
			\eta_K(\xi(u))^{d-3}
			\frac{u\,r_+(u)}
			{K\,\omega'(r_+(u))}        \\
			&\qquad\qquad\times
			\mathscr M(k_1+v(u,q))
			\mathscr M(v(u,q))
			\mathscr M(k_1)
			|F(u)|.
		\end{aligned}
		\]
		
		On the resonant surface,
		\(
		\omega(r_+(u))=\omega(K)+\omega(u).
		\)
		Since \(\omega(r)=r^\gamma\), this gives
		\(
		r_+(u)=\bigl(K^\gamma+u^\gamma\bigr)^{1/\gamma}.
		\)
		Moreover,
		we also have \(
		\omega'(r_+)=\gamma r_+^{\gamma-1}.
		\)
		Using
		\[
		\mathscr M(k)=|k|^{\gamma+d-2}\exp(-\mathscr A |k|),
		\]
		we obtain
		\[
		\begin{aligned}
			|B|
			&\leq
			C_d
			\int_0^\infty \mathrm du\,
			\eta_K(\xi(u))^{d-3}
			r_+(u)^d
			K^{\gamma+d-3}
			u^{\gamma+d-1}      
			\exp\bigl(-\mathscr A (r_+(u)+K+u)\bigr)
			|F(u)| .
		\end{aligned}
		\]
		For \(d\geq 3\), we have
		\[
		0\leq \eta_K(\xi(u))\leq u\leq r_+(u),
		\qquad
		K\leq r_+(u).
		\]
		Therefore, we obtain
		\[
		\eta_K(\xi(u))^{d-3}
		r_+(u)^d
		K^{\gamma+d-3}
		\exp\bigl(-\mathscr A(r_+(u)+K)\bigr)
		\leq C,
		\]
		where \(C\) depends only on \(d\), \(\gamma\), and \(\mathscr A\). Consequently, we have
		\[
		|B|
		\leq
		\mathcal D_{S'}
		\int_0^\infty
		|F(u)|\,u^{\gamma+d-1}\exp(-\mathscr A u)\,\mathrm du .
		\]
		Since \(\exp(-\mathscr A u)\leq 1\), we also obtain
		\[
		|B|
		\leq
		\mathcal C_{S'}
		\int_0^\infty
		|F(u)|\,u^{\gamma+d-1}\,\mathrm du .
		\]
		Equivalently, because \(\omega(u)=u^\gamma\), we find
		\[
		|B|
		\leq
		\mathcal C_{S'}
		\int_0^\infty
		|F(u)|\,\omega(u)\,u^{d-1}\,\mathrm du .
		\]
		
	\end{proof}

	\section{The collision operator as a bilinear map}
	
	We set
	\[
	X:=L^\infty_{\rm rad}(\mathbb R^d)
	\]
	with norm
	\(
	\|f\|_X:=\|f\|_{L^\infty(\mathbb R^d)}.
	\)
	For \(f,g\in X\), we define the bilinear operator
	\[
	\mathcal B(f,g)(k):=\mathcal B_0(f,g)(k)-2\mathcal B_1(f,g)(k),
	\]
	where
	\[
	\begin{aligned}
		\mathcal B_0(f,g)(k)
		:=
		\int_{\mathscr S_k}
		\frac{
			V_{k,k_1,k-k_1}
		}{
			|\nabla \mathscr H_0^k(k_1)|
		}
		\Big[
		f(k_1)g(k-k_1)
		-f(k)g(k_1)
		-f(k)g(k-k_1)
		\Big]\,d\sigma(k_1),
	\end{aligned}
	\]
	and
	\[
	\begin{aligned}
		\mathcal B_1(f,g)(k)
		:=
		\int_{\mathscr S'_k}
		\frac{
			V_{k+k_2,k,k_2}
		}{
			|\nabla \mathscr H_1^k(k_2)|
		}
		\Big[
		f(k)g(k_2)
		-f(k)g(k+k_2)
		-f(k+k_2)g(k_2)
		\Big]\,d\sigma(k_2).
	\end{aligned}
	\]
	Then, we can write
	\[
	Q_{3w}[f]=\mathcal B(f,f).
	\]

	\begin{lemma}
		\label{lemma:bilinear-bound}
		There exists a constant \(C_B>0\), depending only on
		\(d,\gamma,\mathscr A\) and on the constants in
		Proposition~\ref{lem-Sp}, such that for all \(f,g\in X\),
		\[
		\|\mathcal B(f,g)\|_X
		\leq
		C_B\|f\|_X\|g\|_X.
		\]
		Consequently,
		\[
		\|Q_{3w}[f]-Q_{3w}[g]\|_X
		\leq
		C_B\bigl(\|f\|_X+\|g\|_X\bigr)\|f-g\|_X
		\]
		for all \(f,g\in X\).
	\end{lemma}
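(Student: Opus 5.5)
We bound $\mathcal B_0(f,g)$ and $\mathcal B_1(f,g)$ separately in $L^\infty$, using Proposition~\ref{lem-Sp} with a suitable radial weight, and then obtain the Lipschitz estimate from bilinearity. Fix $k\neq 0$; the set $\{k=0\}$ is null and irrelevant in $L^\infty$. Each of the three terms in the bracket of $\mathcal B_0$ is bounded pointwise on $\mathscr S_k$ by $\|f\|_X\|g\|_X$. We therefore get
\[
|\mathcal B_0(f,g)(k)|\le 3\|f\|_X\|g\|_X\int_{\mathscr S_k}\frac{\mathscr M(k-k_1)\mathscr M(k_1)\mathscr M(k)}{|\nabla\mathscr H_0^k(k_1)|}\,d\sigma(k_1).
\]
This is the integral $A$ of Proposition~\ref{lem-Sp}, with $k_1$ replaced by $k$, the integration variable renamed, and the radial test function $F\equiv 1$. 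The second estimate of the proposition then yields
\[
|\mathcal B_0(f,g)(k)|\le 3\mathcal D_S\|f\|_X\|g\|_X\int_0^{\infty} r^{\gamma+d-1}e^{-\mathscr A r}\,dr=3\mathcal D_S\,\Gamma(\gamma+d)\mathscr A^{-\gamma-d}\|f\|_X\|g\|_X,
\]
which is uniform in $k$. Note that the first estimate, with weight $\omega(r)r^{d-1}$, would only give a bound growing like $|k|^{\gamma+d}$. This is why we use the exponentially weighted version, extending $\int_0^{|k|}$ to $\int_0^\infty$.

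We treat $\mathcal B_1$ the same way. Each bracket term is at most $\|f\|_X\|g\|_X$ in absolute value, and $V_{k+k_2,k,k_2}=\mathscr M(k+k_2)\mathscr M(k)\mathscr M(k_2)$ is exactly the weight in $B$. The second $\mathscr S'$ estimate with $F\equiv1$ gives $|\mathcal B_1(f,g)(k)|\le 3\mathcal D_{S'}\Gamma(\gamma+d)\mathscr A^{-\gamma-d}\|f\|_X\|g\|_X$. Combining the two bounds, we may take $C_B=3(\mathcal D_S+2\mathcal D_{S'})\Gamma(\gamma+d)\mathscr A^{-\gamma-d}$.

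For measurability and radiality, $\mathcal B(f,g)$ is radial by rotation invariance of the surfaces, the kernel and the data. Absolute convergence of the surface integrals follows from the same bound applied to $|f|,|g|$. Since we only need an a.e.\ pointwise bound, representatives of $f,g$ can be chosen bounded everywhere by their norms. We also note that the surface integrals involve $f,g$ only at points $k_1, k-k_1, k+k_2$; since everything is radial and the surface measure pushes forward to the $du$-measure of Proposition~\ref{lem-Sp}, null sets in $|k|$ cause no issue.

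The Lipschitz bound follows from bilinearity. Since $Q_{3w}[f]-Q_{3w}[g]=\mathcal B(f,f-g)+\mathcal B(f-g,g)$, we get $\|Q_{3w}[f]-Q_{3w}[g]\|_X\le C_B(\|f\|_X+\|g\|_X)\|f-g\|_X$. The only real point to check is that Proposition~\ref{lem-Sp} indeed applies with $F\equiv 1$, i.e.\ that its constants do not depend on $F$. This holds, since they depend only on $d,\gamma,\mathscr A$. No substantial obstacle remains beyond choosing the exponentially weighted estimate to get uniformity in $k$.
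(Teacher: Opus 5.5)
Your proposal is correct and follows essentially the same route as the paper. You bound each bracket term by $\|f\|_X\|g\|_X$, apply the exponentially weighted estimates of Proposition~\ref{lem-Sp} with $F\equiv 1$ to obtain the uniform constant $C_B=3G(\mathcal D_S+2\mathcal D_{S'})$ with $G=\Gamma(\gamma+d)\mathscr A^{-\gamma-d}$, and deduce the Lipschitz bound from bilinearity (your splitting $\mathcal B(f,f-g)+\mathcal B(f-g,g)$ is an equivalent variant of the paper's $\mathcal B(f-g,f)+\mathcal B(g,f-g)$, and your remarks on representatives and null sets add care the paper leaves implicit).
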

	
	\begin{proof}
		We first observe that \(\mathcal B(f,g)\) is radial whenever \(f\) and \(g\)
		are radial. Indeed, for every rotation \(R\in SO(d)\), we have
		\[
		\mathscr S_{Rk}=R\mathscr S_k,
		\qquad
		\mathscr S'_{Rk}=R\mathscr S'_k,
		\]
		and the quantities
		\(
		V_{k,k_1,k_2},
		|\nabla\mathscr H_0^k|,
		|\nabla\mathscr H_1^k|
		\)
		are invariant under simultaneous rotations of their arguments. A change of
		variables on the corresponding resonance surface therefore gives
		\[
		\mathcal B(f,g)(Rk)=\mathcal B(f,g)(k).
		\]
		Thus \(\mathcal B(f,g)\in X\), provided the integrals are bounded.
		
		We now estimate the \(S_k\)-part. Since \(f,g\in L^\infty\), we find
		\[
		\begin{aligned}
			|\mathcal B_0(f,g)(k)|
			&\leq
			3\|f\|_X\|g\|_X
			\int_{\mathscr S_k}
			\frac{
				V_{k,k_1,k-k_1}
			}{
				|\nabla\mathscr H_0^k(k_1)|
			}
			\,d\sigma(k_1).
		\end{aligned}
		\]
		Applying Proposition~\ref{lem-Sp} with the radial function \(F\equiv1\), and
		using the symmetry \(k_1\leftrightarrow k-k_1\) on \(\mathscr S_k\), we obtain
		\[
		\int_{\mathscr S_k}
		\frac{
			V_{k,k_1,k-k_1}
		}{
			|\nabla\mathscr H_0^k(k_1)|
		}
		\,d\sigma(k_1)
		\leq
		\mathcal D_S
		\int_0^{|k|}
		r^{\gamma+d-1}e^{-\mathscr A r}\,dr
		\leq
		\mathcal D_S G,
		\]
		where
		\[
		G:=\int_0^\infty r^{\gamma+d-1}e^{-\mathscr A r}\,dr<\infty.
		\]
		Therefore, we can bound
		\[
		|\mathcal B_0(f,g)(k)|
		\leq
		3\mathcal D_SG\|f\|_X\|g\|_X.
		\]
		
		Similarly, for the \(S'_k\)-part, we have
		\[
		\begin{aligned}
			|\mathcal B_1(f,g)(k)|
			&\leq
			3\|f\|_X\|g\|_X
			\int_{\mathscr S'_k}
			\frac{
				V_{k+k_2,k,k_2}
			}{
				|\nabla\mathscr H_1^k(k_2)|
			}
			\,d\sigma(k_2).
		\end{aligned}
		\]
		Again by Proposition~\ref{lem-Sp}, with \(F\equiv1\), we get
		\[
		\int_{\mathscr S'_k}
		\frac{
			V_{k+k_2,k,k_2}
		}{
			|\nabla\mathscr H_1^k(k_2)|
		}
		\,d\sigma(k_2)
		\leq
		\mathcal D_{S'}G.
		\]
		Hence, we obtain
		\[
		|\mathcal B_1(f,g)(k)|
		\leq
		3\mathcal D_{S'}G\|f\|_X\|g\|_X.
		\]
		Combining the estimates and recalling the factor \(2\) in front of
		\(\mathcal B_1\), we get
		\[
		\|\mathcal B(f,g)\|_X
		\leq
		3G(\mathcal D_S+2\mathcal D_{S'})
		\|f\|_X\|g\|_X.
		\]
		Thus one may take
		\(
		C_B:=3G(\mathcal D_S+2\mathcal D_{S'}).
		\)
		
		Finally, since
		\[
		Q_{3w}[f]-Q_{3w}[g]
		=
		\mathcal B(f,f)-\mathcal B(g,g)
		=
		\mathcal B(f-g,f)+\mathcal B(g,f-g),
		\]
		the Lipschitz estimate follows immediately from the bilinear bound.
	\end{proof}

	\section{Global time-analytic solutions of 3-wave kinetic equations}\label{Sec:3wave}
	
	We construct an exact strong solution of the 3-wave kinetic equation \eqref{3wave} in the form
	\begin{equation}\label{Sec:3wave:1}
		f(t,k) \;=\; \sum_{n=0}^{\infty} t^n \mathcal{A}_n(k),
	\end{equation}
	where $\mathcal{A}_0(k) = f_0(k)$.
	
	\noindent Plugging \eqref{Sec:3wave:1} into the 3-wave kinetic equation \eqref{3wave}, we find, by a direct computation
	\begin{equation}\label{Sec:3wave:2}
		\begin{aligned}
			\mathcal{A}_{n+1}(k) = {} & \frac{1}{n+1}\iint_{\mathbb{R}^{2d}} \mathrm{d}k_1 \, \mathrm{d}k_2 \, V_{k,k_1,k_2}\, \delta(k - k_1 - k_2)\, \delta(\omega_k - \omega_{k_1} - \omega_{k_2}) \\
			& \qquad \times \sum_{\substack{0 \le i,j \\ i+j = n}}\Big(	\mathcal{A}_i(k_1)\mathcal{A}_j(k_2) - \mathcal{A}_i(k)\mathcal{A}_j(k_1) - \mathcal{A}_i(k)\mathcal{A}_j(k_2)\Big) \\
			& - \frac{2}{n+1} \iint_{\mathbb{R}^{2d}} \mathrm{d}k_1 \, \mathrm{d}k_2 \, V_{k_1,k,k_2}\,	\delta(k_1 - k - k_2)\, \delta(\omega_{k_1} - \omega_k - \omega_{k_2}) \\
			& \qquad \times \sum_{\substack{0 \le i,j \\ i+j = n}} \Big(		\mathcal{A}_i(k)\mathcal{A}_j(k_2) - \mathcal{A}_i(k)\mathcal{A}_j(k_1) - \mathcal{A}_i(k_1)\mathcal{A}_j(k_2) \Big).
		\end{aligned}
	\end{equation}
	
	\begin{lemma}\label{Lemma:3Induction}
		Assume that
		\(
		M:=\sup_{k\in\mathbb R^d}|f_0(k)|<\infty
		\)
		and set
		\[
		G:=\int_0^\infty r^{\gamma+d-1}\exp(-\mathscr A r)\,\mathrm dr<\infty .	\]
		Then there exists a positive constant \(\mathcal C_1\), depending only on
		\(M\), \(G\), and the constants in Proposition~\ref{lem-Sp}, such that
		for every \(n=0,1,2,\ldots\),
		\[
		|\mathcal A_n(k)|
		\leq
		\mathcal C_1^n M
		\]
		for a.e. \(k\in\mathbb R^d\).
	\end{lemma}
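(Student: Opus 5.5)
Induction on \(n\) is the natural route, using the recursion \eqref{Sec:3wave:2} together with the bilinear bound of Lemma~\ref{lemma:bilinear-bound}. The first step is to rewrite the recursion in the language of Section~3. By the reduction \eqref{WT-Q} of the delta functions to surface integrals, the right-hand side of \eqref{Sec:3wave:2} is exactly
\[
\mathcal A_{n+1}=\frac{1}{n+1}\sum_{i+j=n}\mathcal B(\mathcal A_i,\mathcal A_j).
\]
Here I will be careful that each product \(\mathcal A_i(\cdot)\mathcal A_j(\cdot)\) sits in the correct slot of \(\mathcal B_0\) and \(\mathcal B_1\). The symmetry \(k_1\leftrightarrow k-k_1\) on \(\mathscr S_k\) lets me symmetrize if needed, and the bound only uses absolute values anyway. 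I also need, inductively, that every \(\mathcal A_n\) is radial and lies in \(X\). Radiality of \(\mathcal B(f,g)\) for radial \(f,g\) was shown in the proof of Lemma~\ref{lemma:bilinear-bound}, and \(\mathcal A_0=f_0\) is radial.

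Applying Lemma~\ref{lemma:bilinear-bound} termwise gives
\[
\|\mathcal A_{n+1}\|_X\le \frac{C_B}{n+1}\sum_{i+j=n}\|\mathcal A_i\|_X\|\mathcal A_j\|_X .
\]
With the induction hypothesis \(\|\mathcal A_i\|_X\le \mathcal C_1^iM\) for \(i\le n\), each summand is at most \(\mathcal C_1^nM^2\). There are exactly \(n+1\) summands, so the sum is at most \((n+1)\mathcal C_1^nM^2\), and the prefactor \(1/(n+1)\) cancels this count precisely. This yields
\[
\|\mathcal A_{n+1}\|_X\le C_B M\,\mathcal C_1^n M .
\]
Choosing \(\mathcal C_1:=C_BM=3G(\mathcal D_S+2\mathcal D_{S'})M\) (or \(\max\{C_BM,1\}\) to avoid degeneracy when \(M=0\)) closes the induction, since then \(\|\mathcal A_{n+1}\|_X\le \mathcal C_1^{n+1}M\). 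The base case \(n=0\) is the definition of \(M\). This \(\mathcal C_1\) depends only on \(M\), \(G\), and the constants of Proposition~\ref{lem-Sp}, as claimed.

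The main obstacle is not the counting, which is exact, but justifying the identification of \eqref{Sec:3wave:2} with \(\frac1{n+1}\sum\mathcal B(\mathcal A_i,\mathcal A_j)\). In particular, the second integral in \eqref{Sec:3wave:2}, written with \(\delta(k_1-k-k_2)\), must reduce to the \(\mathscr S'_k\) surface integral with kernel \(V_{k+k_2,k,k_2}/|\nabla\mathscr H_1^k|\), and the constants \(\mathcal D_S,\mathcal D_{S'}\) must control all three terms in each bracket. For the terms where \(\mathcal A_i\) is evaluated at \(k-k_1\) or \(k+k_2\) rather than at the integration variable, I would simply bound \(|\mathcal A_i|\) by its sup norm and apply Proposition~\ref{lem-Sp} with \(F\equiv1\), exactly as in the proof of Lemma~\ref{lemma:bilinear-bound}. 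This sidesteps any need to move radial functions across the two surfaces.
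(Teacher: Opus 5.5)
Your proposal is correct and is essentially the paper's argument. The paper runs the same induction: it bounds each of the three products in the $\mathscr S_k$- and $\mathscr S'_k$-integrals by sup norms, applies Proposition~\ref{lem-Sp} with $F\equiv1$ to get the factors $\mathcal D_S G$ and $\mathcal D_{S'}G$, lets the $n+1$ pairs cancel the prefactor $1/(n+1)$, and takes $\mathcal C_1=9\max\{\mathcal D_S,\mathcal D_{S'}\}GM$. You package the same estimates through Lemma~\ref{lemma:bilinear-bound}, which gives the slightly smaller $\mathcal C_1=C_BM$, and your $\max\{C_BM,1\}$ choice correctly handles the degenerate case $M=0$, which the paper's choice of $\mathcal C_1$ leaves nonpositive.
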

	
	\begin{proof}
		We argue by  induction. The estimate is immediate for \(n=0\), since
		\(\mathcal A_0=f_0\). Assume that, for some \(n\geq 0\),
		\[
		|\mathcal A_\ell(k)|
		\leq
		\mathcal C_1^\ell M
		\]
		for every \(0\leq \ell\leq n\) and for a.e. \(k\in\mathbb R^d\). We prove
		the estimate for \(\mathcal A_{n+1}\).
		
		We fix indices \(i,j\geq 0\) with \(i+j=n\) and  define
		\[
		\begin{aligned}
			\mathcal A^1_{i,j}(k)
			&:=
			\iint_{\mathbb R^{2d}} \mathrm d k_1\,\mathrm d k_2\,
			V_{k,k_1,k_2}
			\delta(k-k_1-k_2)
			\delta(\omega_k-\omega_{k_1}-\omega_{k_2})        \\
			&\qquad\qquad\times
			\Big(
			\mathcal A_i(k_1)\mathcal A_j(k_2)
			-
			\mathcal A_i(k)\mathcal A_j(k_1)
			-
			\mathcal A_i(k)\mathcal A_j(k_2)
			\Big).
		\end{aligned}
		\]
		Using the delta constraint \(k_2=k-k_1\), we may write
		\[
		\begin{aligned}
			|\mathcal A^1_{i,j}(k)|
			&\leq
			\int_{\mathscr S_k}
			\frac{\mathrm d\sigma(k_1)}
			{|\nabla\mathscr H_0^k(k_1)|}\,
			V_{k,k_1,k-k_1}     
			\Big(
			|\mathcal A_i(k_1)|\,|\mathcal A_j(k-k_1)|
			+
			|\mathcal A_i(k)|\,|\mathcal A_j(k_1)|
			+
			|\mathcal A_i(k)|\,|\mathcal A_j(k-k_1)|
			\Big).
		\end{aligned}
		\]
		By the induction hypothesis and \(i+j=n\), we have
		\[
		|\mathcal A_i(\cdot)|\,|\mathcal A_j(\cdot)|
		\leq
		\mathcal C_1^n M^2 .
		\]
		Therefore, we find
		\[
		|\mathcal A^1_{i,j}(k)|
		\leq
		3\mathcal C_1^n M^2
		\int_{\mathscr S_k}
		\frac{\mathrm d\sigma(k_1)}
		{|\nabla\mathscr H_0^k(k_1)|}\,
		V_{k,k_1,k-k_1}.
		\]
		Applying Proposition~\ref{lem-Sp} gives
		\[
		|\mathcal A^1_{i,j}(k)|
		\leq
		3\mathcal D_S\,\mathcal C_1^n M^2 G .
		\]
		
		Next, we define
		\[
		\begin{aligned}
			\mathcal A^2_{i,j}(k)
			&:=
			-
			\iint_{\mathbb R^{2d}} \mathrm d k_1\,\mathrm d k_2\,
			V_{k_1,k,k_2}
			\delta(k_1-k-k_2)
			\delta(\omega_{k_1}-\omega_k-\omega_{k_2})        \\
			&\qquad\qquad\times
			\Big(
			\mathcal A_i(k)\mathcal A_j(k_2)
			-
			\mathcal A_i(k)\mathcal A_j(k_1)
			-
			\mathcal A_i(k_1)\mathcal A_j(k_2)
			\Big).
		\end{aligned}
		\]
		Using the constraint \(k_1=k+k_2\), we get
		\[
		\begin{aligned}
			|\mathcal A^2_{i,j}(k)|
			&\leq
			\int_{\mathscr S'_k}
			\frac{\mathrm d\sigma(k_2)}
			{|\nabla\mathscr H_1^k(k_2)|}\,
			V_{k+k_2,k,k_2}               
			\Big(
			|\mathcal A_i(k)|\,|\mathcal A_j(k_2)|
			+
			|\mathcal A_i(k)|\,|\mathcal A_j(k+k_2)|
			+
			|\mathcal A_i(k+k_2)|\,|\mathcal A_j(k_2)|
			\Big).
		\end{aligned}
		\]
		Again, by the induction hypothesis,
		\[
		|\mathcal A^2_{i,j}(k)|
		\leq
		3\mathcal C_1^n M^2
		\int_{\mathscr S'_k}
		\frac{\mathrm d\sigma(k_2)}
		{|\nabla\mathscr H_1^k(k_2)|}\,
		V_{k+k_2,k,k_2}.
		\]
		Applying Proposition~\ref{lem-Sp},  yields
		\[
		|\mathcal A^2_{i,j}(k)|
		\leq
		3\mathcal D_{S'}\,\mathcal C_1^n M^2 G .
		\]
		
		We recall the recurrence relation for \(\mathcal A_{n+1}\)
		\[
		\mathcal A_{n+1}(k)
		=
		\frac{1}{n+1}
		\sum_{\substack{i,j\geq 0\\ i+j=n}}
		\mathcal A^1_{i,j}(k)
		+
		\frac{2}{n+1}
		\sum_{\substack{i,j\geq 0\\ i+j=n}}
		\mathcal A^2_{i,j}(k).
		\]
		Since there are \(n+1\) pairs \((i,j)\) with \(i+j=n\), we obtain
		\[
		\begin{aligned}
			|\mathcal A_{n+1}(k)|
			&\leq
			3\mathcal D_S\,\mathcal C_1^n M^2 G
			+
			6\mathcal D_{S'}\,\mathcal C_1^n M^2 G     \leq
			9\mathcal C\,\mathcal C_1^n M^2 G,
		\end{aligned}
		\]
		where
		\(
		\mathcal C:=\max\{\mathcal D_S,\mathcal D_{S'}\}.
		\)
		We choose
		{\(
			\mathcal C_1:=9\mathcal C G M.
			\)}
		Then
		\[
		|\mathcal A_{n+1}(k)|
		\leq
		\mathcal C_1^{n+1}M .
		\]
		This closes the induction and proves the lemma.
	\end{proof}

	From \eqref{Sec:3wave:1}, we obtain the bound
	\begin{equation}\label{Sec:3wave:3}
		\big| f(t,k) \big| \;\le\; \sum_{n=0}^{\infty} t^n \mathcal C_1^{\,n}\, M \;\le\; \frac{M}{1 - t \mathcal C_1},
	\end{equation}
	for a.e.\ $k\in\mathbb{R}^d$ and all $0 \leq  t < \frac{1}{\mathcal C_1}$. We conclude that $f$ is analytic in $\left[0,\dfrac{1}{\mathcal C_1}\right)$.

	\begin{lemma}\label{Lemma:local-uniqueness} We set 
		Let \(f\) and \(g\) be two radial local strong solutions on \([0,T]\) with the
		same initial datum and with
		\[
		\sup_{0\leq t\leq T}
		\bigl(
		\|f(t)\|_{L^\infty}
		+
		\|g(t)\|_{L^\infty}
		\bigr)
		<\infty .
		\]
		Then \(f=g\) on \([0,T]\).
	\end{lemma}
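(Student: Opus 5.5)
The argument is a Gronwall estimate in $X=L^\infty_{\rm rad}(\mathbb R^d)$ built on the Lipschitz bound of Lemma~\ref{lemma:bilinear-bound}. First set
\[
R:=\sup_{0\le t\le T}\bigl(\|f(t)\|_{L^\infty}+\|g(t)\|_{L^\infty}\bigr)<\infty .
\]
By the definition of strong radial solution, $f,g\in C^1([0,T];X)$, with $\frac{d}{dt}f=Q_{3w}[f]$ and $\frac{d}{dt}g=Q_{3w}[g]$ in $X$. Since the derivative is continuous in $X$, we may integrate in time (as an $X$-valued Riemann integral) and write, for every $t\in[0,T]$,
\[
f(t)-g(t)=\int_0^t\bigl(Q_{3w}[f(s)]-Q_{3w}[g(s)]\bigr)\,\mathrm ds ,
\]
using $f(0)=g(0)=f_0$.

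Next take $X$-norms and use the Lipschitz estimate of Lemma~\ref{lemma:bilinear-bound}, $\|Q_{3w}[f]-Q_{3w}[g]\|_X\le C_B(\|f\|_X+\|g\|_X)\|f-g\|_X$. Setting $w(t):=\|f(t)-g(t)\|_X$, this gives
\[
w(t)\le C_BR\int_0^t w(s)\,\mathrm ds .
\]
The function $w$ is continuous because $f,g\in C([0,T];X)$, and $w(0)=0$. Gronwall's inequality then yields $w\equiv0$ on $[0,T]$. Equivalently, with $W(t)=\int_0^t w$, one has $(e^{-C_BRt}W)'\le0$, which forces $W\equiv0$.

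There is no real obstacle here. The only points to watch are that the integral identity holds in $X$, which uses the $C^1$ property of strong solutions rather than pointwise differentiation in $k$, and that Lemma~\ref{lemma:bilinear-bound} applies to $f(s),g(s)$ because these lie in $X$, being radial and bounded. The conclusion $f=g$ holds as an identity in $X$, hence for a.e. $k$ and every $t\in[0,T]$.
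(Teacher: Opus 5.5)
Your proof is correct and takes the same route as the paper. The paper also obtains a local Lipschitz bound for $Q_{3w}$ in $L^\infty_{\mathrm{rad}}$, writes the equation in integral form, and applies Gronwall's inequality; you make the Lipschitz constant explicit as $C_T=C_BR$ via Lemma~\ref{lemma:bilinear-bound}, and you justify the integral identity through the $C^1([0,T];X)$ property, a step the paper leaves implicit.
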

	
	\begin{proof}
		Using the surface estimates of Proposition~\ref{lem-Sp}, the collision operator
		is locally Lipschitz in \(L^\infty_{\mathrm{rad}}\). Hence there exists a
		constant \(C_T>0\) such that
		\[
		\|Q_{3w}[f(t)]-Q_{3w}[g(t)]\|_{L^\infty}
		\leq
		C_T\|f(t)-g(t)\|_{L^\infty}.
		\]
		Therefore, we find
		\[
		\|f(t)-g(t)\|_{L^\infty}
		\leq
		C_T\int_0^t
		\|f(s)-g(s)\|_{L^\infty}\,ds .
		\]
		Gronwall's inequality gives
		\(
		\|f(t)-g(t)\|_{L^\infty}=0
		\)
		for every \(0\leq t\leq T\). Hence \(f=g\).
	\end{proof}
	
	\begin{lemma}
		\label{Lemma:positivity-preservation}
		Let
		\[
		X:=L^\infty_{\rm rad}(\mathbb R^d),
		\qquad
		X_+:=\{h\in X:h\geq0\ \text{a.e.}\}.
		\]
		Assume that
		\(
		h\in X_+.
		\)
		Let \(u\in C^1([0,T);X)\) be the unique strong radial solution of
		\[
		\partial_t u=Q_{3w}[u],
		\qquad
		u(0)=h.
		\]
		Then
		\(
		u(t,\cdot)\in X_+
		\)
		for every \(0\leq t<T\).
	\end{lemma}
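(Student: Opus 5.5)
We split $Q_{3w}$ into a gain part and a loss part that is linear in the value $u(k)$, and treat the loss part by an integrating factor. Concretely, from the definition of $\mathcal B$ we write
\[
Q_{3w}[u](k)=\mathcal G[u](k)-u(k)\,\mathcal L[u](k),
\]
where
\[
\mathcal G[u](k)=\int_{\mathscr S_k}\frac{V_{k,k_1,k-k_1}}{|\nabla\mathscr H_0^k|}u(k_1)u(k-k_1)\,d\sigma(k_1)+2\int_{\mathscr S'_k}\frac{V_{k+k_2,k,k_2}}{|\nabla\mathscr H_1^k|}u(k+k_2)u(k_2)\,d\sigma(k_2)
\]
and
\[
\mathcal L[u](k)=\int_{\mathscr S_k}\frac{V_{k,k_1,k-k_1}}{|\nabla\mathscr H_0^k|}\bigl(u(k_1)+u(k-k_1)\bigr)d\sigma(k_1)+2\int_{\mathscr S'_k}\frac{V_{k+k_2,k,k_2}}{|\nabla\mathscr H_1^k|}\bigl(u(k_2)-u(k+k_2)\bigr)d\sigma(k_2).
\]
These are the sign-rearranged terms of $\mathcal B_0(u,u)-2\mathcal B_1(u,u)$: the term $-2u(k)u(k_2)$ from $\mathcal B_1$ goes into the loss with sign so that $-u(k)\mathcal L$ reproduces it. The operator $\mathcal G$ is quadratic with nonnegative kernel, so $\mathcal G[u]\ge0$ whenever $u\ge0$. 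Both $\mathcal G$ and $\mathcal L$ map $X$ boundedly into $X$, with $\|\mathcal G[u]\|_X\le C\|u\|_X^2$ and $\|\mathcal L[u]\|_X\le C\|u\|_X$, by exactly the surface bounds with $F\equiv1$ used in Lemma~\ref{lemma:bilinear-bound}. Both are also locally Lipschitz.

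Fix $T'<T$ and set $R=\sup_{[0,T']}\|u(t)\|_X$, which is finite by continuity. We then consider the modified problem
\[
\partial_t w=\mathcal G[w^+]-w\,\mathcal L[w^+]+\lambda(w^+-w),\qquad w(0)=h,
\]
where $w^+=\max(w,0)$. Equivalently, writing $\Lambda(t,k)=\int_0^t(\mathcal L[w^+]+\lambda)\,ds$, we use the mild form
\[
w(t)=e^{-\Lambda(t)}h+\int_0^t e^{-(\Lambda(t)-\Lambda(s))}\bigl(\mathcal G[w^+]+\lambda w^+\bigr)(s)\,ds.
\]
Here $\lambda$ is chosen large, say $\lambda\ge C(R+1)$. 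The mild form shows $w\ge0$ pointwise, since all terms are nonnegative.

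A simpler variant avoids the cutoff entirely. Apply the mild formula directly to $u$, with $\Lambda_u=\int_0^t\mathcal L[u]$, so that $u$ is the fixed point of
\[
\Phi(v)(t)=e^{-\Lambda_u(t)}h+\int_0^te^{-(\Lambda_u(t)-\Lambda_u(s))}\mathcal G[v](s)\,ds.
\]
One would like to run Picard iteration from $v_0=e^{-\Lambda_u}h\ge0$, keeping $\Lambda_u$ frozen since $u$ is known. However, $\mathcal G$ is quadratic and the gain map is not linear in $v$, so the fixed-point identification needs care. The cleaner route is therefore to freeze one factor: replace $\mathcal G[v]$ by $\tilde{\mathcal G}(v)$, bilinear with one factor $u^+$. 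That still requires knowing $u\ge0$, which is circular.

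So I would commit to the cutoff approach. First, solve the modified equation by Picard iteration in $C([0,\tau];X)$, using that $w\mapsto w^+$ is 1-Lipschitz on $X$ and that $\mathcal G,\mathcal L$ are locally Lipschitz. The mild form gives $w\ge0$. Second, since $w\ge0$ we have $w^+=w$, so the $\lambda$ terms cancel and $w$ is a strong solution of the original equation with data $h$. Third, Lemma~\ref{Lemma:local-uniqueness} gives $w=u$ on $[0,\tau]$. Finally, $\tau$ depends only on a bound for $\|w\|_X$, which equals $\|u\|_X\le R$ on $[0,T']$, so the argument can be restarted at $\tau,2\tau,\dots$ with the nonnegative datum $u(\tau)$. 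Covering $[0,T']$ this way and letting $T'\to T$ completes the proof.

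The main obstacle is the first step: showing that the nonnegative mild solution exists on an interval of uniform length controlled by $R$. One must check that $\Lambda$ is nondecreasing in $t$ (because $\lambda$ dominates $|\mathcal L[w^+]|$ given the bound $\|w\|\le 2R$ maintained on a short interval), so that $e^{-(\Lambda(t)-\Lambda(s))}\le1$. This is what keeps the iteration contractive in a ball of radius $2R$, with $\tau\sim 1/(C R+\lambda)$.
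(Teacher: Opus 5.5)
Your argument is correct and is essentially the paper's proof. Both use the same gain--loss splitting (your $\mathcal L$ is the paper's $\mathcal A=\mathcal L-\mathcal R$), a variation-of-constants fixed point whose output is manifestly nonnegative, identification with $u$ via Lemma~\ref{Lemma:local-uniqueness}, and a continuation step; the only difference is that you enforce positivity with the truncation $w^+$ and the shift $\lambda$, whereas the paper runs the contraction for $\Phi$ with the unfrozen coefficient $\mathcal A[v]$ directly on the closed set of nonnegative functions in the $2R$-ball, which $\Phi$ preserves because $\mathcal G[v]\ge 0$ for $v\ge 0$ --- so the ``simpler variant'' you abandoned is not circular, and monotonicity of $\Lambda$ is unnecessary since the exponential factors are bounded by $e^{2C_ART}$ on short intervals.
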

	
	\begin{proof}
		We write the equation in gain-loss form. For a radial function \(f\), we define
		\[
		\begin{aligned}
			\mathcal G[f](k)
			&:=
			\int_{\mathscr S_k}
			\frac{
				V_{k,k_1,k-k_1}
				f(k_1)f(k-k_1)
			}{
				|\nabla\mathscr H_0^k(k_1)|
			}
			\,d\sigma(k_1)
			+
			2\int_{\mathscr S'_k}
			\frac{
				V_{k+k_2,k,k_2}
				f(k+k_2)f(k_2)
			}{
				|\nabla\mathscr H_1^k(k_2)|
			}
			\,d\sigma(k_2),
		\end{aligned}
		\]
		and
		\[
		\begin{aligned}
			\mathcal L[f](k)
			&:=
			\int_{\mathscr S_k}
			\frac{
				V_{k,k_1,k-k_1}
				\bigl(f(k_1)+f(k-k_1)\bigr)
			}{
				|\nabla\mathscr H_0^k(k_1)|
			}
			\,d\sigma(k_1)
			+
			2\int_{\mathscr S'_k}
			\frac{
				V_{k+k_2,k,k_2}
				f(k_2)
			}{
				|\nabla\mathscr H_1^k(k_2)|
			}
			\,d\sigma(k_2),
		\end{aligned}
		\]
		and
		\[
		\mathcal R[f](k)
		:=
		2\int_{\mathscr S'_k}
		\frac{
			V_{k+k_2,k,k_2}
			f(k+k_2)
		}{
			|\nabla\mathscr H_1^k(k_2)|
		}
		\,d\sigma(k_2).
		\]
		Then, we have
		\[
		Q_{3w}[f](k)
		=
		\mathcal G[f](k)
		-
		\bigl(\mathcal L[f](k)-\mathcal R[f](k)\bigr)f(k).
		\]
		We now set
		\[
		\mathcal A[f]:=\mathcal L[f]-\mathcal R[f].
		\]
		Thus the equation becomes
		\[
		\partial_t f+\mathcal A[f]f=\mathcal G[f].
		\]
		
		We first prove a local positivity result. By the surface estimates of
		Proposition~\ref{lem-Sp}, there exist constants \(C_G,C_A>0\), depending only
		on \(d,\gamma,\mathscr A\), such that for all \(f,g\in X\),
		\[
		\|\mathcal G[f]\|_X
		\leq
		C_G\|f\|_X^2, \ \ \ \ \|\mathcal G[f]-\mathcal G[g]\|_X
		\leq
		C_G\bigl(\|f\|_X+\|g\|_X\bigr)\|f-g\|_X,
		\]
		and
		\[
		\|\mathcal A[f]\|_X
		\leq
		C_A\|f\|_X,
		\qquad
		\|\mathcal A[f]-\mathcal A[g]\|_X
		\leq
		C_A\|f-g\|_X.
		\]
		Moreover, if \(f\in X_+\), then
		\(
		\mathcal G[f]\geq0
		\text{a.e.}
		\)
		
		We fix \(R>0\). We claim that there exists \(\tau_R>0\) such that, whenever
		\(
		h\in X_+,
		\|h\|_X\leq R,
		\)
		the Cauchy problem has a unique solution on \([0,\tau_R]\), and that solution
		is nonnegative.
		
		We set
		\[
		\mathcal E_{T,R}
		:=
		\left\{
		v\in C([0,T];X):
		v(t)\in X_+\ \text{for every }t\in[0,T],
		\quad
		\|v\|_{C([0,T];X)}\leq 2R
		\right\}.
		\]
		For \(v\in\mathcal E_{T,R}\),  we define
		\[
		(\Phi v)(t,k)
		:=
		h(k)
		\exp\left(
		-\int_0^t\mathcal A[v(s)](k)\,ds
		\right)
		+
		\int_0^t
		\mathcal G[v(\tau)](k)
		\exp\left(
		-\int_\tau^t\mathcal A[v(s)](k)\,ds
		\right)
		\,d\tau .
		\]
		The exponential factors are strictly positive. Since \(h\geq0\) and
		\(\mathcal G[v]\geq0\) whenever \(v\geq0\), we have
		\(
		\Phi v(t)\in X_+
		\)
		for every \(t\in[0,T]\).
		
		We next show that \(\Phi\) maps \(\mathcal E_{T,R}\) into itself for \(T\)
		small. If \(v\in\mathcal E_{T,R}\), then we bound
		\[
		\|\mathcal A[v(t)]\|_X\leq 2C_AR,
		\qquad
		\|\mathcal G[v(t)]\|_X\leq 4C_GR^2.
		\]
		Therefore, we find
		\[
		\|\Phi v(t)\|_X
		\leq
		\exp(2C_ART)
		\bigl(R+4C_GR^2T\bigr).
		\]
		Choosing \(T>0\) sufficiently small, depending only on \(R\), we obtain
		\[
		\exp(2C_ART)
		\bigl(R+4C_GR^2T\bigr)
		\leq
		2R.
		\]
		Thus \(\Phi\mathcal E_{T,R}\subseteq\mathcal E_{T,R}\).
		We now prove that \(\Phi\) is a contraction for \(T\) still smaller. Let
		\(v,w\in\mathcal E_{T,R}\). Using
		\[
		\|\mathcal A[v]-\mathcal A[w]\|_X
		\leq
		C_A\|v-w\|_X,\ \ \ \ 
		\|\mathcal G[v]-\mathcal G[w]\|_X
		\leq
		4C_GR\|v-w\|_X,
		\]
		we obtain
		\[
		\begin{aligned}
			\|\Phi v-\Phi w\|_{C([0,T];X)}
			&\leq
			\exp(2C_ART)
			\Bigl(
			C_ART
			+
			4C_GRT
			+
			4C_AC_GR^2T^2
			\Bigr)
			\|v-w\|_{C([0,T];X)}.
		\end{aligned}
		\]
		Choosing \(T>0\) smaller if necessary, the coefficient on the right-hand side
		is strictly less than \(1\). Therefore \(\Phi\) is a contraction on
		\(\mathcal E_{T,R}\).
		
		By the Banach fixed-point theorem, \(\Phi\) has a unique fixed point
		\(u\in\mathcal E_{T,R}\). Since \(u=\Phi u\), differentiating the
		variation-of-constants formula gives
		\[
		\partial_t u+\mathcal A[u]u=\mathcal G[u],
		\]
		or equivalently
		\[
		\partial_t u=Q_{3w}[u],
		\qquad
		u(0)=h.
		\]
		Moreover, because \(u\in\mathcal E_{T,R}\), we have
		\(
		u(t)\in X_+
		\)
		for every \(t\in[0,T]\). This proves local positivity.
		
		Now let \(u\in C^1([0,T);X)\) be the strong radial solution with
		nonnegative initial datum \(h\). We define
		\[
		T_*
		:=
		\sup
		\left\{
		s\in[0,T):
		u(t,\cdot)\in X_+
		\ \text{for every }0\leq t\leq s
		\right\}.
		\]
		It is clear that \(T_*>0\).
		
		We claim that \(T_*=T\). Suppose, for contradiction, that \(T_*<T\). Since
		\(u\in C([0,T);X)\), and since \(X_+\) is a closed cone in \(X\), we have
		\(
		u(T_*,\cdot)\in X_+.
		\)
		We set
		\(
		h_*:=u(T_*,\cdot).
		\)
		Then \(h_*\in X_+\) and \(h_*\in X\). Applying the local positivity result with
		initial datum \(h_*\), we obtain a nonnegative strong radial solution
		\(
		w\in C^1([0,\tau];X)
		\)
		of
		\[
		\partial_t w=Q_{3w}[w],
		\qquad
		w(0)=h_*,
		\]
		for some \(\tau>0\).
		
		On the other hand, the shifted function
		\(
		\widetilde u(s,\cdot):=u(T_*+s,\cdot)
		\)
		also solves the same Cauchy problem on the interval where it is defined. By
		local uniqueness in \(X=L^\infty_{\rm rad}(\mathbb R^d)\),
		\(
		\widetilde u(s,\cdot)=w(s,\cdot)
		\)
		for \(0\leq s<\min\{\tau,T-T_*\}\). Hence
		\(
		u(T_*+s,\cdot)\in X_+
		\)
		for all such \(s\). This contradicts the definition of \(T_*\). Therefore
		\(T_*=T\), and the solution remains nonnegative on the whole interval of
		existence.
	\end{proof}
	
	\begin{lemma}
		\label{lemma:local-analytic-banach}
		Let \(h\in X=L^\infty_{\rm rad}(\mathbb R^d)\), and set
		\(
		H:=\|h\|_X.
		\)
		If \(H>0\), then there
		exists a unique strong radial solution
		\(
		u\in C^1((-T_h,T_h);X)
		\)
		of
		\[
		\partial_t u=Q_{3w}[u],
		\qquad
		u(0)=h,
		\]
		for some
		\(
		T_h\geq \frac{1}{4C_BH}.
		\)
		Moreover, \(u\) is real analytic as an \(X\)-valued function of time.
		
		More precisely, \(u\) admits the power-series representation
		\[
		u(t)=\sum_{n=0}^\infty t^n A_n
		\]
		in \(X\) for \(|t|<1/(C_BH)\), where
		\(
		A_0=h
		\)
		and
		\[
		A_{n+1}
		=
		\frac{1}{n+1}
		\sum_{\substack{i,j\geq0\\ i+j=n}}
		\mathcal B(A_i,A_j),
		\qquad n\geq0.
		\]
		The coefficients satisfy
		\[
		A_n\in X,
		\qquad
		\|A_n\|_X\leq (C_BH)^nH.
		\]
		
		If \(H=0\), then the unique solution is \(u\equiv0\). 
	\end{lemma}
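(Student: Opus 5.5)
The approach is the Cauchy--Kovalevskaya-type majorant argument for a quadratic ODE in the Banach space \(X\), using only the bilinear bound of Lemma~\ref{lemma:bilinear-bound}. First, I define \(A_n\) by the stated recursion. By Lemma~\ref{lemma:bilinear-bound}, \(\mathcal B(A_i,A_j)\in X\) (it is radial and bounded), so every \(A_n\in X\). I then prove \(\|A_n\|_X\le (C_BH)^nH\) by strong induction. Assuming the bound for all indices \(\le n\),
\[
\|A_{n+1}\|_X\le \frac{1}{n+1}\sum_{i+j=n}C_B\|A_i\|_X\|A_j\|_X\le \frac{1}{n+1}(n+1)\,C_B (C_BH)^{n}H^2=(C_BH)^{n+1}H .
\]
Hence the series \(u(t)=\sum_n t^nA_n\) converges absolutely in \(X\) for \(|t|<1/(C_BH)\). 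The same holds for its termwise derivative \(\sum_n (n+1)t^nA_{n+1}\), so \(u\) is real analytic and \(C^1\) as an \(X\)-valued map there.

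Next, I verify that \(u\) solves the equation. By bilinearity and continuity of \(\mathcal B\) on \(X\times X\), we have \(\mathcal B(u(t),u(t))=\sum_{i,j}t^{i+j}\mathcal B(A_i,A_j)\). This double series converges absolutely in \(X\) because \(\sum_{i,j}|t|^{i+j}C_B\|A_i\|\|A_j\|=C_B(\sum_i|t|^i\|A_i\|)^2<\infty\). Regrouping by \(n=i+j\) gives \(\sum_n t^n(n+1)A_{n+1}=u'(t)\). Thus \(\partial_tu=Q_{3w}[u]\), and \(u(0)=h\). Since \(\frac{1}{4C_BH}<\frac{1}{C_BH}\), I can take any \(T_h\) in \([\frac{1}{4C_BH},\frac{1}{C_BH})\), for instance \(T_h=\frac{1}{2C_BH}\). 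On \([-T_h,T_h]\) the solution satisfies \(\|u(t)\|_X\le H/(1-|t|C_BH)\le 2H\). This bounded-norm condition is exactly what the uniqueness argument below needs. For the remaining definitional requirements of a strong solution, continuity of \(t\mapsto Q_{3w}[u(t)]\) follows from the Lipschitz estimate of Lemma~\ref{lemma:bilinear-bound}, and radiality of \(u(t)\) holds because each \(A_n\) is radial.

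Uniqueness follows the proof of Lemma~\ref{Lemma:local-uniqueness}, now on \((-T_h,T_h)\) and with no sign restriction on \(h\). Let \(v\in C^1((-T_h,T_h);X)\) be another solution, and fix a compact subinterval \([-T',T']\). Both \(u\) and \(v\) are continuous there, hence bounded by some \(R\). The Lipschitz bound \(\|Q_{3w}[u]-Q_{3w}[v]\|_X\le 2C_BR\|u-v\|_X\) and Gronwall's inequality, applied forward and backward in time from \(0\), give \(u=v\) on \([-T',T']\), and therefore on all of \((-T_h,T_h)\). The case \(H=0\) is trivial: all \(A_n=0\), \(u\equiv0\) solves the equation, and the same Gronwall argument gives uniqueness on any interval.

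The only delicate point is the justification of the Cauchy-product rearrangement inside \(\mathcal B\) and of termwise differentiation. Both reduce to absolute convergence in the Banach space \(X\) together with continuity of the bounded bilinear map \(\mathcal B\), so no real obstacle is expected.
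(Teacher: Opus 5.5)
Your proposal is correct and follows essentially the same route as the paper. Both arguments define the coefficients recursively, prove the inductive bound $\|A_n\|_X\le (C_BH)^nH$ from the bilinear estimate, use absolute convergence with termwise differentiation for $|t|<1/(C_BH)$, verify the equation through the Cauchy-product identity $\mathcal B(u,u)=\sum_n (n+1)t^nA_{n+1}$, and obtain uniqueness from the Lipschitz estimate and Gronwall. Your uniqueness step is slightly more careful than the paper's, since you get boundedness from continuity on compact subintervals and apply Gronwall both forward and backward from $t=0$.
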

	
	\begin{proof}
		We first construct the formal coefficients. Since \(A_0=h\in X\), and since
		\(\mathcal B:X\times X\to X\) is bilinear and bounded by
		Lemma~\ref{lemma:bilinear-bound}, the recursive formula implies inductively
		that
		\(
		A_n\in X
		\)
		for every \(n\geq0\). In particular, every coefficient is radial.
		
		We prove the coefficient estimate by induction. The estimate is immediate for
		\(n=0\). We assume that
		\[
		\|A_\ell\|_X\leq (C_BH)^\ell H
		\]
		for \(0\leq \ell\leq n\). Then we have
		\[
		\begin{aligned}
			\|A_{n+1}\|_X
			&\leq
			\frac{1}{n+1}
			\sum_{\substack{i,j\geq0\\ i+j=n}}
			\|\mathcal B(A_i,A_j)\|_X                                      
			\leq
			\frac{C_B}{n+1}
			\sum_{\substack{i,j\geq0\\ i+j=n}}
			\|A_i\|_X\|A_j\|_X                                               \\
			&\leq
			\frac{C_B}{n+1}
			\sum_{\substack{i,j\geq0\\ i+j=n}}
			(C_BH)^iH(C_BH)^jH                                               
			=
			C_B(C_BH)^nH^2
			=
			(C_BH)^{n+1}H.
		\end{aligned}
		\]
		This closes the induction.
		
		Therefore the series
		\(
		\sum_{n=0}^\infty t^nA_n
		\)
		converges absolutely in \(X\) whenever
		\(
		|t|<\frac{1}{C_BH}.
		\)
		Furthermore, for every \(0<\rho<1/(C_BH)\),
		\[
		\sum_{n=1}^\infty n\rho^{n-1}\|A_n\|_X<\infty.
		\]
		Hence the series may be differentiated term by term for \(|t|<1/(C_BH)\), and
		\[
		\frac{d}{dt}\sum_{n=0}^\infty t^nA_n
		=
		\sum_{n=0}^\infty (n+1)t^nA_{n+1}.
		\]
		
		It remains to check that the collision operator may be applied termwise. We let
		\[
		u(t):=\sum_{n=0}^\infty t^nA_n.
		\]
		Since \(\mathcal B\) is a bounded bilinear map on \(X\), absolute convergence
		of the series gives
		\[
		\mathcal B(u(t),u(t))
		=
		\sum_{n=0}^\infty
		t^n
		\sum_{\substack{i,j\geq0\\ i+j=n}}
		\mathcal B(A_i,A_j)
		\]
		in \(X\). By the defining recursion for \(A_{n+1}\), this becomes
		\[
		\mathcal B(u(t),u(t))
		=
		\sum_{n=0}^\infty
		(n+1)t^nA_{n+1}
		=
		\partial_t u(t).
		\]
		Since \(Q_{3w}[u]=\mathcal B(u,u)\), we obtain
		\(
		\partial_t u=Q_{3w}[u]
		\)
		in \(X\), and \(u(0)=h\).
		
		Uniqueness follows from the Lipschitz estimate in
		Lemma~\ref{lemma:bilinear-bound}. Indeed, if \(u\) and \(v\) are two solutions
		on a common interval and are bounded in \(X\), then
		\[
		\|Q_{3w}[u(t)]-Q_{3w}[v(t)]\|_X
		\leq
		C_B\bigl(\|u(t)\|_X+\|v(t)\|_X\bigr)\|u(t)-v(t)\|_X.
		\]
		Gronwall's inequality gives \(u=v\).
		The power-series representation proves real analyticity in time as an
		\(X\)-valued function.
	\end{proof}
	\section{Proof of Theorem \ref{maintheorem}}
	
	Let \(T_{\max}\in(0,+\infty]\) be the maximal time of existence of the
	nonnegative strong radial solution constructed by the local theory. We will prove
	that
	\[
	T_{\max}=+\infty.
	\]
	
	We now set
	\(
	N(t):=\|f(t,\cdot)\|_{L^\infty(\mathbb R^d)}.
	\)
	By the local positivity result, the solution remains nonnegative on its
	maximal interval of existence:
	\[
	f(t,k)\geq0
	\]
	for a.e. \(k\in\mathbb R^d\) and every \(0\leq t<T_{\max}\).
	
	By the standard energy
	conservation (see \cite{nguyen2017quantum,ToanBinh}), we have
	\[
	\int_0^\infty f(t,r)\,\omega(r)\,r^{d-1}\,\mathrm dr
	=
	\int_0^\infty f_0(r)\,\omega(r)\,r^{d-1}\,\mathrm dr
	=
	\mathcal M_1
	\]
	for every \(0\leq t<T_{\max}\).
	
	We now derive an a priori estimate for \(N(t)\). We now recall
	\[
	\partial_t f(t,k)
	+
	\bigl(\mathcal L[f](t,k)-\mathcal R[f](t,k)\bigr)f(t,k)
	=
	\mathcal G[f](t,k).
	\]
	Since
	\(
	f\geq0
	\text{ and }
	\mathcal L[f]\geq0,
	\)
	we have, for a.e. \(k\),
	\[
	\partial_t f(t,k)
	\leq
	\mathcal G[f](t,k)+\mathcal R[f](t,k)f(t,k).
	\]
	Therefore, we can bound
	\[
	f(t,k)
	\leq
	f_0(k)
	+
	\int_0^t
	\Big(
	\mathcal G[f](\tau,k)
	+
	\mathcal R[f](\tau,k)f(\tau,k)
	\Big)
	\,\mathrm d\tau .
	\]
	
	We estimate the integrand. Since \(f\geq0\), we have
	\[
	\begin{aligned}
		&\mathcal G[f](t,k)+\mathcal R[f](t,k)f(t,k)\ 
		=
		\int_{\mathscr S_k}
		\mathrm d\sigma(k_1)\,
		\frac{
			V_{k,k_1,k-k_1}
			f(t,k_1)f(t,k-k_1)}
		{|\nabla\mathscr H_0^k(k_1)|}
		\\
		&\quad
		+
		2\int_{\mathscr S'_k}
		\mathrm d\sigma(k_2)\,
		\frac{
			V_{k+k_2,k,k_2}
			f(t,k+k_2)f(t,k_2)}
		{|\nabla\mathscr H_1^k(k_2)|}
		+
		2f(t,k)
		\int_{\mathscr S'_k}
		\mathrm d\sigma(k_2)\,
		\frac{
			V_{k+k_2,k,k_2}
			f(t,k+k_2)}
		{|\nabla\mathscr H_1^k(k_2)|}.
	\end{aligned}
	\]
	Using
	\(
	0\leq f(t,\cdot)\leq N(t),
	\)
	we obtain
	\[
	\begin{aligned}
		&\mathcal G[f](t,k)+\mathcal R[f](t,k)f(t,k)
		\ 
		\leq
		N(t)
		\int_{\mathscr S_k}
		\mathrm d\sigma(k_1)\,
		\frac{
			V_{k,k_1,k-k_1}
			f(t,k_1)}
		{|\nabla\mathscr H_0^k(k_1)|}
		\\
		&\quad
		+
		2N(t)
		\int_{\mathscr S'_k}
		\mathrm d\sigma(k_2)\,
		\frac{
			V_{k+k_2,k,k_2}
			f(t,k_2)}
		{|\nabla\mathscr H_1^k(k_2)|}
		+
		2N(t)
		\int_{\mathscr S'_k}
		\mathrm d\sigma(k_2)\,
		\frac{
			V_{k+k_2,k,k_2}
			f(t,k+k_2)}
		{|\nabla\mathscr H_1^k(k_2)|}.
	\end{aligned}
	\]
	
	The first integral is controlled by Proposition~\ref{lem-Sp}
	\[
	\int_{\mathscr S_k}
	\mathrm d\sigma(k_1)\,
	\frac{
		V_{k,k_1,k-k_1}
		f(t,k_1)}
	{|\nabla\mathscr H_0^k(k_1)|}
	\leq
	\mathcal C_S\mathcal M_1.
	\]
	Similarly, we can also bound
	\[
	\int_{\mathscr S'_k}
	\mathrm d\sigma(k_2)\,
	\frac{
		V_{k+k_2,k,k_2}
		f(t,k_2)}
	{|\nabla\mathscr H_1^k(k_2)|}
	\leq
	\mathcal C_{S'}\mathcal M_1.
	\]
	
	It remains to estimate the last term. Using the parametrization of
	\(\mathscr S'_k\), we write
	\[
	u:=|k_2|,\ \ \ 
	r_+(u):=|k+k_2|.
	\]
	On the resonant surface, we have
	\(
	\omega(r_+(u))=\omega(|k|)+\omega(u).
	\)
	Thus, for \(\omega(r)=r^\gamma\),
	\(
	r_+(u)=\bigl(|k|^\gamma+u^\gamma\bigr)^{1/\gamma}.
	\) The surface estimate gives
	\[
	\begin{aligned}
		\int_{\mathscr S'_k}
		\mathrm d\sigma(k_2)\,
		\frac{
			V_{k+k_2,k,k_2}
			f(t,k+k_2)}
		{|\nabla\mathscr H_1^k(k_2)|}
		\
		&\leq \
		\mathcal C_{S'}
		\int_0^\infty
		f(t,r_+(u))\,u^{\gamma+d-1}\,\mathrm du .
	\end{aligned}
	\]
	We change variables
	\(
	r=r_+(u)=\bigl(|k|^\gamma+u^\gamma\bigr)^{1/\gamma}.
	\)
	Then, we find
	\[
	u=(r^\gamma-|k|^\gamma)^{1/\gamma},\ \ \ \mathrm du
	=
	r^{\gamma-1}
	(r^\gamma-|k|^\gamma)^{1/\gamma-1}
	\,\mathrm dr.
	\]
	Hence, we obtain
	\[
	u^{\gamma+d-1}\,\mathrm du
	=
	u^d r^{\gamma-1}\,\mathrm dr
	\leq
	r^{\gamma+d-1}\,\mathrm dr.
	\]
	Therefore, we have
	\[
	\begin{aligned}
		\int_0^\infty
		f(t,r_+(u))\,u^{\gamma+d-1}\,\mathrm du
		&\leq
		\int_{|k|}^\infty
		f(t,r)\,r^{\gamma+d-1}\,\mathrm dr\leq
		\mathcal M_1.
	\end{aligned}
	\]
	Thus the last term is bounded by
	\(
	\mathcal C_{S'}\mathcal M_1.
	\)
	
	Combining the three estimates, we obtain
	\[
	\mathcal G[f](t,k)+\mathcal R[f](t,k)f(t,k)
	\leq
	C_0\mathcal M_1N(t),
	\]
	where
	\(
	C_0:=\mathcal C_S+4\mathcal C_{S'}.
	\)
	Consequently, we have
	\[
	f(t,k)
	\leq
	f_0(k)
	+
	C_0\mathcal M_1
	\int_0^t N(\tau)\,\mathrm d\tau.
	\]
	Taking the essential supremum over \(k\), we get
	\[
	N(t)
	\leq
	N(0)
	+
	C_0\mathcal M_1
	\int_0^t N(\tau)\,\mathrm d\tau.
	\]
	By Gronwall's inequality, we have
	\[
	N(t)
	\leq
	N(0)\exp(C_0\mathcal M_1t)
	\]
	for every
	\(
	0\leq t<T_{\max}.
	\)
	
	If \(T_{\max}<+\infty\), then the previous estimate gives
	\[
	\limsup_{t\uparrow T_{\max}}N(t)
	\leq
	N(0)\exp(C_0\mathcal M_1T_{\max})
	<+\infty.
	\]
	This contradicts the blow-up alternative. Therefore, we deduce
	\(
	T_{\max}=+\infty.
	\)
	Hence the nonnegative strong radial solution is global in time.
	
	Finally, we prove the asserted time analyticity. We fix \(t_0\geq0\), and set
	\(
	h:=f(t_0,\cdot)\in L^\infty_{\rm rad}(\mathbb R^d).
	\)
	By Lemma~\ref{lemma:local-analytic-banach}, the Cauchy problem with initial
	datum \(h\) has a unique \(L^\infty_{\rm rad}\)-valued analytic solution
	near \(s=0\). On the other hand, the shifted function
	\(
	\widetilde f(s,\cdot):=f(t_0+s,\cdot)
	\)
	is also a strong radial solution with the same initial datum \(h\), for all
	\(s\) sufficiently small such that \(t_0+s\geq0\). Therefore,
	\[
	\widetilde f(s,\cdot)
	=
	\sum_{n=0}^\infty s^nA_n^{(t_0)}
	\]
	in \(L^\infty_{\rm rad}(\mathbb R^d)\) for \(s\) sufficiently small. If
	\(t_0>0\), this gives a two-sided analytic expansion around \(t_0\). If
	\(t_0=0\), it gives right-analyticity at the initial time. This completes the proof of Theorem~\ref{maintheorem}.

	\bibliographystyle{plain}
	\bibliography{WaveTurbulence}
\end{document}